\newcommand{\dd}{\mathrm{d}}
\newcommand{\bb}[1]{\boldsymbol{#1}}
\tikzset{external/force remake}
\newcommand{\tcb}[1]{\leavevmode{\color{Blue}{#1}}}
\newcommand{\green}{}
\renewcommand{\Re}{\mathrm{Re}}
\renewcommand{\Im}{\mathrm{Im}}
\renewcommand{\bar}{\overline}
\renewcommand{\tilde}{\widetilde}
\newcommand{\sx}{\mathsf{x}}
\newcommand{\sy}{\mathsf{y}}
\newcommand{\rr}[1]{\left(#1\right)}
\newcommand{\bx}{{\bm{x}}}
\newcommand{\by}{{\bm{y}}}
\newcommand{\bk}{{\bm{k}}}
\newcommand{\supp}{\text{supp}}
\newcommand{\R}{\mathbb{R}}
\newcommand{\C}{\mathbb{C}}
\newcommand{\M}{\mathcal{M}}
\newcommand{\A}{\mathcal{A}}
\newcommand{\W}{\mathcal{W}}
\newcommand{\CS}{C^\infty_0(\M)}
\newcommand{\Sol}{\mathsf{Sol}}
\newcommand{\fock}{{\mathfrak{F}(\mathcal{H})}}
\newtheorem{proposition}{Proposition}
\begin{document}

\title{Transmission of quantum information through quantum fields in curved spacetimes}

\author{Michael Kasprzak}
\thanks{Author to whom any correspondence should be addressed}

\email{mk7592@princeton.edu}
\affiliation{Department of Physics, Princeton University, Princeton, New Jersey 08544, USA}

\author{Erickson Tjoa}
\email{erickson.tjoa@mpq.mpg.de}
\affiliation{Max-Planck-Institut f\"ur Quantenoptik, Hans-Kopfermann-Stra\ss e 1, D-85748 Garching, Germany}

\begin{abstract}

We construct a relativistic quantum communication channel between two localized qubit systems, mediated by a relativistic quantum field, that can achieve the theoretical maximum for the quantum capacity in arbitrary curved spacetimes using the Unruh-DeWitt detector formalism. Using techniques from algebraic quantum field theory, we express the quantum capacity of the quantum communication channel purely in terms of the correlation functions of the field and the causal propagator for the wave equation. Consequently, the resulting quantum channel, and hence the quantum capacity, are by construction manifestly covariant, respect the causal structure of spacetime, and are independent of the details of the background geometry, topology, and the choice of Hilbert space (quasifree) representations of the field.  
\end{abstract}

\maketitle

\section{Introduction}

In the spirit of it-from-(qu)bit, quantum information theory has been extensively used to understand generic physical phenomena from quantum many-body physics to high-energy physics, from applied to foundational physics. Relativistic quantum information (RQI) is one of the many attempts to realize this spirit, which thus far focuses on the role of relativity in quantum information-processing tasks, as well as a deeper understanding of fundamental theories such as relativistic quantum field theories (QFT). The role of quantum information theory in relativistic QFTs cannot be understated: a non-exhaustive list includes rigorous study of the entanglement structure, complexity, and measurement theory for QFTs \cite{verch2005distillability,hollands2017entanglement,hollands2023channel,sanders2023separable,summers1985bell,summers1987bell,casini2020entanglement,casini2021entropic,longo2018relative,fewster2020measurement,bostelmann2021impossible,polo2021detectorbased,jubb2022causal,pranzini2023detector,vanLuijk2023schmidt,vanLuijk2024embezzlement}; operational approach to particle production due to non-inertial motion and gravitational fields \cite{hawking1975particle,Crispino2008review,Unruh1979evaporation,DeWitt1979,Aubry2014derivative,aubry2018quantumBH,tjoa2022unruh,kaplanek2020hot,tjoa2023effective,moustos2017nonmarkov}; connections between relativistic causality, information-theoretic causality and reference frames \cite{vilasini2022embedding,Hardy2007towards,zych2019bell,Costa2016causal,paunkovic2020causal};  quantum information protocols mediated by the relativistic quantum field \cite{reznik2003entanglement,reznik2005violating,Valentini1991nonlocalcorr,pozas2015harvesting,pozas2016entanglement,Jonsson2014cavityQED,tjoa2021harvesting,perche2024relativisticEH,tjoa2022teleport,Landulfo2016communication,tjoa2022capacity,lapponi2023relativistic,kent1999bitcommit,lo1998quantum,adlam2015crypto,buhrman2014position,vilasini2019composable,Blasco2015Huygens,Blasco2016broadcast,simidzija2017cosmo,Jonsson2018qubit,Simidzija2020capacity,barcellos2024broadcast}; etc. 

This work is motivated by a somewhat vague, ambitious, and fundamental question: 
\begin{tcolorbox}
    \textbf{Problem:} \textit{What is the information-carrying capacity of quantum fields?}
\end{tcolorbox}
\noindent One can think of several ways to answer this question by making the problem statement more precise. For our purposes, we will adopt an operational approach, namely, whether two parties Alice and Bob, each in possession of some quantum-mechanical probes (atoms, qudits, oscillators, etc.) are able to transmit (quantum) information using the relativistic quantum field as a medium --- that is, as a \textit{relativistic quantum channel} mediating the communication between Alice and Bob. A simple model for of such a communication channel is through the \textit{Unruh-DeWitt} (UDW) \textit{particle detector model} and its generalizations or variants \cite{Unruh1979evaporation,DeWitt1979,Tales2020GRQO,tjoa2023qudit,Lopp2021deloc,tjoa2023nonperturbative,tjoa2024UDW} (see, e.g., \cite{mancini2014frw} for one different approach). Indeed, transmission of classical information has been analyzed by many authors both perturbatively and non-perturbatively and in flat and curved spacetimes using the UDW model \cite{Blasco2015Huygens,Blasco2016broadcast,simidzija2017cosmo,Landulfo2016communication,tjoa2022capacity,lapponi2023relativistic,barcellos2024broadcast,lapponi2024comm2}. Transmission of quantum information in the non-perturbative regime has been studied only in flat spacetimes in very specific examples \cite{Jonsson2018qubit,Simidzija2020capacity}. Note that in all these cases, the communication is encoded through the field \textit{amplitude} and not, say, through the polarization degrees of freedom of the electromagnetic field, since on its own the polarization is blind to the relativistic features and constraints. 

Here we present the most general formulation for the transmission of quantum information between Alice and Bob, each in possession of a two-level system (a `qubit detector'), by suitable coupling to a relativistic scalar field in arbitrary curved spacetime. Clearly, our work constitutes a natural extension of the flat spacetime result in \cite{Jonsson2018qubit,Simidzija2020capacity,Blasco2015Huygens,Blasco2016broadcast,simidzija2017cosmo}. However, we will do so by exploiting techniques from {algebraic quantum field theory} (AQFT) \cite{wald1994quantum,fewster2019algebraic,Khavkhine2015AQFT,KayWald1991theorems}. Crucially, what we gain from this will not be a simple rewriting of all the formulae in the literature: we will see that the quality of the relativistic quantum channel mediated by the quantum scalar field can be formulated purely in terms of smeared correlation functions of the field and the detector parameters. Our results make clear what features of the field-mediated communication channel are necessary for perfect transmission of quantum information, and they are by construction manifestly covariant, respect the causal structure of spacetime, and are independent of the details of the background geometry, topology, and (to some extent) the Hilbert space representations of the field. In particular, it automatically works for all quasifree representations that give rise to Fock representations \cite{wald1994quantum,Khavkhine2015AQFT,fewster2019algebraic,KayWald1991theorems}).

This paper is organized as follows. In Section~\ref{sec: AQFT} we briefly review the scalar field theory in curved spacetimes using the AQFT framework. In Section~\ref{sec: setup} we describe our UDW detector setup and review the quantum capacity of a quantum channel. In Section~\ref{sec: protocol} we give the protocol that allows the transmission of quantum information through the quantum field. In Section~\ref{sec: analysis} we discuss the validity of the channel and study the regimes in which it fails. In Section~\ref{sec: conclude} we conclude with some discussions of our results and outlook. We set $c=\hbar=1$ and adopt mostly-plus signature for the metric.

\section{Scalar field theory in curved spacetime}
\label{sec: AQFT}

In this section we review the scalar field theory in curved spacetimes within the AQFT framework to establish notation and convention, adapting the convention given in \cite{hack2015cosmological,tjoa2023nonperturbative} (for more details and reviews, see also \cite{tjoa2023nonperturbative,fewster2019algebraic,wald1994quantum,Khavkhine2015AQFT}). Readers who are familiar with the framework can skip to Section~\ref{sec: setup}.

\subsection{AQFT for scalar field}

Consider a Klein-Gordon (KG) scalar field $\phi$ in an $(n+1)$-dimensional globally hyperbolic Lorentzian spacetime $\mathcal{M}$ with metric tensor $g_{ab}$, whose equation of motion is given by the KG equation: 
\begin{align}
    (-\nabla_a\nabla^a + m^2 +\xi R)\phi = 0\,,
   \label{eq: KGE}
\end{align}
where $\nabla_a$ is the covariant derivative with respect to the Levi-Civita connection, $m$ is the mass parameter, $R$ is the Ricci scalar curvature, and $\xi\geq 0$ controls the non-minimal coupling of $\phi$ to $R$. Global hyperbolicity implies that we have a foliation of spacetime $\R \times \Sigma$ and the KG equation is well-posed with initial data given on Cauchy surfaces. 

The KG operator $\hat{P}\coloneqq -\nabla_a\nabla^a + m^2 +\xi R$ admits a unique \textit{advanced and retarded Green's functions} $G_{A}(\sx,\sy),G_{R}(\sx,\sy) = G_{A}(\sy,\sx)$ which satisfies
\begin{align}
    \hat{P}_\sx G_{R/A}(\sx,\sy) = \delta^{n+1}(\sx-\sy)\,,
\end{align}
where the subscript on $\hat{P}_\sx$ means the KG operator is w.r.t. the variable $\sx$. We can use these Green's functions to construct the antisymmetric \textit{causal propagator} (also known as the Pauli-Jordan distribution)
\begin{align}
    E(\sx,\sy) \coloneqq G_R(\sx,\sy) - G_{\textsc{a}}(\sx,\sy)\,.
\end{align}
Canonical quantization of the field $\phi$ gives rise to a field operator $\hat{\phi}$ obeying the canonical commutation relations (CCR) 
\begin{align}
    [\hat{\phi}(\sx),\hat{\phi}(\sy)] = i E(\sx,\sy)\openone\,.
    \label{eq: CCR}
\end{align}
As $\hat{\phi}$ is an operator-valued distribution, in the algebraic approach to QFT we consider \textit{smeared field operators}, i.e., by viewing the field operator as an $\R$-linear map from the test space of smooth compactly supported functions $\CS$ to a $*$-algebra $\A(\M)$, i.e., as a map $\hat{\phi}: C^\infty_0 (\M)\to \mathcal{A}(\M)$ with
\begin{align}
    \hat{\phi}(f)\coloneqq \int\dd V \,f(\sx)\hat{\phi}(\sx)\,.
    \label{eq: ordinary smearing}
\end{align}
The algebra of observables $\A(\M)$ is defined to be the collection of sums of products of $\hat{\phi}(f)$ over all functions in $\CS$ obeying the smeared CCR relation \eqref{eq: CCR}, i.e., a \textit{CCR algebra} \cite{Bratteli1972afalgebra}. The KG equation is encoded in the fact that we identify $\hat{\phi}(f) \equiv 0$ if $f=\hat{P}h$ for some $h\in\CS$.

The dynamical content of the field theory is reflected in terms of the solution space of the KG equation. The solution space $\Sol_\R(\M)$ can be equipped with a symplectic form $\sigma:\Sol_\R(\M)\times\Sol_\R(\M)\to \R$, defined as
\begin{align}
    \sigma(\phi_1,\phi_2) \coloneqq \int_{\Sigma_t}\!\! {\dd\Sigma^a}\,\Bigr[\phi_{{1}}\nabla_a\phi_{{2}} - \phi_{{2}}\nabla_a\phi_{{1}}\Bigr]\,,
    \label{eq: symplectic form}
\end{align}
where $\dd \Sigma^a = -t^a \dd\Sigma$, $-t^a$ is the inward-directed unit normal to the Cauchy surface $\Sigma_t$, and $\dd\Sigma = \sqrt{h}\,\dd^n\bx$ is the induced volume form on $\Sigma_t$ \cite{poisson2009toolkit,wald2010general}. As is well-known, this definition is independent of the choice of Cauchy surface. Writing $Ef\coloneqq\int\dd V'E(\sx,\sx')f(\sx')$, the field operator $\hat\phi(f)$ can be expressed as \textit{symplectically smeared field operator}  \cite{wald1994quantum} 
\begin{align}
    \label{eq: symplectic smearing}
    {\hat\phi(f) \equiv \sigma(Ef,\hat\phi)\,,}
\end{align}
and the CCR algebra can be written as 
\begin{align}
    {[\sigma(Ef,\hat\phi),\sigma(Eg,\hat\phi)] = i\sigma(Ef,Eg)\openone = i E(f,g)\openone \,,}
\end{align}
where $\sigma(Ef,Eg) = E(f,g)$ in the second equality follows from Eq.~\eqref{eq: ordinary smearing} and \eqref{eq: symplectic smearing}. 

Since the smeared field operators are unbounded operators, for free fields it is more convenient technically to work the \textit{Weyl algebra} $\W(\M)$, whose elements are bounded operators constructed by ``exponentiating'' the field operators. The Weyl algebra $\W(\M)$ is a unital $C^*$-algebra generated by elements that formally take the form 
\begin{align}
    W(Ef) \equiv 
    {e^{i\hat\phi(f)}}\,,\quad f\in \CS\,.
    \label{eq: Weyl-generator}
\end{align}
These elements satisfy \textit{Weyl relations}:
\begin{equation}
    \begin{aligned}
    W(Ef)^\dagger &= W(-Ef)\,, \notag \\
    \quad 
    W(E (\hat Pf) ) &= \openone\,,\\
    W(Ef)W(Eg) &= e^{-\frac{i}{2}E(f,g)} W(E(f+g))
    \end{aligned}
    \label{eq: Weyl-relations}
\end{equation}
where $f,g\in \CS$. The third relation enforces relativistic causality (or \textit{microcausality}).

In AQFT, a quantum state is defined as a $\C$-linear functional $\omega:\W(\M)\to \C$ (similarly for $\A(\M)$) such that 
\begin{align}
    \omega(\openone) = 1\,,\quad  \omega(A^\dagger A)\geq 0\quad \forall A\in \W(\M)\,.
    \label{eq: algebraic-state}
\end{align}
The state $\omega$ is pure if it cannot be written as $\omega= \alpha \omega_1 + (1-\alpha)\omega_2$ for any $\alpha\in (0,1)$ and any two algebraic states $\omega_1,\omega_2$; otherwise we say that the state is mixed.

The way to pass from AQFT to the Hilbert space approach is through \textit{Gelfand-Naimark-Segal (GNS) reconstruction theorem} \cite{wald1994quantum,Khavkhine2015AQFT,fewster2019algebraic}. The GNS theorem gives us, for a given algebra of observables $\W(\M)$ and a state $\omega$, a \textit{GNS triple} $(\mathcal{H}_\omega, \pi_\omega,{\ket{\Omega_\omega}})$, where $\pi_\omega: \mathcal{\W(\M)}\to {\mathcal{B}(\mathcal{H}_\omega)}$ is a Hilbert space representation with respect to state $\omega$. In its GNS representation, any state $\omega$ is realized as a {vector state} $\ket{\Omega_\omega}\in\mathcal{H}_\omega$ and  $A\in \W(\M)$ are represented as bounded operators $\hat A\coloneqq \pi_\omega(A)\in \mathcal{B}(\mathcal{H}_\omega)$, and we write $ \omega(A) = \braket{\Omega_\omega|\hat A|\Omega_\omega}$. Since there exists infinitely many unitarily inequivalent representations of the CCR algebra, the algebraic approach allows us work with all representations at once and only pick the physically relevant representation at the very end. 

For a fixed state $\omega$, we can construct the \textit{Wightman} $n$-\textit{point functions}, defined by
\begin{align}
    \mathsf{W}(f_1,...,f_n)\coloneqq \omega(\hat\phi(f_1)...\hat\phi(f_n))
    \label{eq: n-point-functions}
\end{align}
where $f_j\in \CS$ and the RHS is computed within some GNS representation of $\A(\M)$. The GNS representation of the Weyl algebra $\W(\M)$ allows us to calculate Eq.~\eqref{eq: n-point-functions} by differentiation\footnote{Strictly speaking, the state on $\A(\M)$ and on $\W(\M)$ are different, but they are in one-to-one correspondence whenever they are related by Eq.~\eqref{eq: Wightman-formal-bulk} (see \cite{ruep2021weakly} for good exposition on this). }: for example,  the smeared Wightman two-point function reads
\begin{align}\label{eq: Wightman-formal-bulk}
    &\mathsf{W}(f,g) \equiv -\frac{\partial^2}{\partial s\partial t}\Bigg|_{s,t=0}\!\!\!\!\!\!\!\!\omega(e^{i\hat\phi(sf)}e^{i\hat\phi(tg)})
\end{align}
where the RHS is calculated in the GNS representation of $\W(\M)$  \cite{fewster2019algebraic}. 

The consensus within the AQFT community is that physically reasonable states should be \textit{Hadamard states} \cite{KayWald1991theorems,Radzikowski1996microlocal}. Very roughly speaking, these states respect local flatness and the expectation values of all observables (in particular, the renormalized stress-energy tensor) are finite \cite{KayWald1991theorems}. A particularly nice subclass of Hadamard states is the family of \textit{quasifree states}: for these states, all odd-point functions in the sense of \eqref{eq: n-point-functions} vanish and all higher even-point functions can be written as in terms of just two-point functions. In modern terminology, the term quasifree state is synonymous with \textit{Gaussian state}, where the one-point functions need not vanish and higher-point functions only depend on one- and two-point functions \cite{ruep2021weakly}.

The relevance of quasifree states lies in the fact that they are completely specified once we know the Wightman two-point functions. More precisely, any given quasifree state $\omega$ is associated with a \textit{real-bilinear inner product} $\mu:\Sol_\R(\M)\times\Sol_\R(\M)\to \R$ that satisfies \cite{KayWald1991theorems}
\begin{align}
    |\sigma(Ef,Eg)|^2 \leq \mu(Ef,Ef)\mu(Eg,Eg)\,, 
    \label{eq: real-bilinear IP}
\end{align}
where we recall that $Ef,Eg\in \Sol_\R(\M)$ for all $f,g\in\CS$. The inequality is saturated if $\omega_\mu$ is a \textit{pure} quasifree state. Any quasifree state can then \textit{defined} by those that satisfy
\begin{align}
    \omega(W(Ef)) \coloneqq e^{-\mu(Ef,Ef)/2}\,.
    \label{eq: quasifree-def}
\end{align}
However, this definition is not useful unless we can compute the induced norm $||Ef|| \coloneqq \sqrt{\mu(Ef, Ef)}$. 

It turns out that $||Ef||$ is related to the Wightman two-point function as follows \cite{KayWald1991theorems}: we complexify $\Sol_\R(\M)$ into the space of \textit{complex} solutions $\Sol_\C(\M)$ and define the \textit{KG bilinear product} by 
\begin{align}
    \braket{\varphi_1,\varphi_2}_{\textsf{KG}}\coloneqq i\sigma(\varphi_1^*,\varphi_2)
\end{align}
for $\varphi_1,\varphi_2\in \Sol_\C(\M)$. It can be shown that for any quasifree state, there exists a subspace $\mathcal{H}\subset\Sol_\C(\M)$ such that $(\mathcal{H},\braket{\cdot,\cdot}_{\textsf{KG}})$ is a Hilbert space and an $\R$-linear map $K:\Sol_\R(\M)\to\mathcal{H}$ such that for all $\varphi_1,\varphi_2\in\Sol_\R(\M)$
\begin{enumerate}[label=(\alph*)]
    \item $K\Sol_\R(\M)+i K\Sol_\R(\M)$ is dense in $\mathcal{H}$;
    \item $\mu(\varphi_1,\varphi_2) = \Re\braket{K\varphi_1,K\varphi_2}_{\textsf{KG}}$;
    \item $\sigma(\varphi_1,\varphi_2) = 2\Im\braket{K\varphi_1,K\varphi_2}_{\textsf{KG}}$;
    \item $\Sol_\C(\M)\cong \mathcal{H}\oplus\mathcal{\overline H}$, where $\mathcal{\overline{H}}$ is the complex-conjugate Hilbert space of $\mathcal{H}$ and $\braket{u,v}_{\textsf{KG}}=0$ for all $u\in\mathcal{H}$ and $v\in\mathcal{\overline H}$.
\end{enumerate}
In a more familiar language of canonical quantization, the map $K$ projects to the ``positive-frequency part'' of a real solution to the KG equation. The Wightman two-point function is then given by \cite{KayWald1991theorems}
\begin{align}
    \mathsf{W}(f,g) &= \braket{KEf,KEg}_{\textsf{KG}} \notag\\
    &= \mu(Ef,Eg) + \frac{i}{2}E(f,g)\,,
\end{align}
where we have used the fact that $\sigma(Ef,Eg) = E(f,g)$. Since $E(f,g)$ is antisymmetric, it follows that $||Ef||^2 = \mu(Ef,Ef) \equiv \mathsf{W}(f,f)$ and hence
\begin{align}
    \omega(W(Ef)) =   e^{-{\frac{1}{2}}\mathsf{W}(f,f)}\,.
    \label{eq: quasifree-definition}
\end{align}
Indeed, we may very well take Eq.~\eqref{eq: quasifree-definition} as the \textit{definition} of quasifree states.

The most important example of a quasifree state is the ``vacuum state'' $\omega$, where we can write the (unsmeared) vacuum Wightman function as 
\begin{align}
    \omega(\hat{\phi}(\sx)\hat{\phi}(\sy)) \equiv \mathsf{W}(\sx,\sy) &= \int \dd^n\bk\, u^{\phantom{*}}_\bk(\sx) u^*_\bk(\sy)\,,
    \label{eq: two-point-wightman}
\end{align}
where $u_\bk(\sx)$ are called the positive-frequency modes of KG operator with respect to the KG inner product
\begin{align}
    (\phi_1,\phi_2)_\textsc{kg}\coloneqq i\sigma(\phi_1^*,\phi_2)\,,
\end{align}
where $\phi_j\in \Sol_\C(\M)$ are complexified solutions to Eq.~\eqref{eq: KGE}. The vacuum state $\omega$ is not unique and there is a sense in which every GNS vector is a ``vacuum state'' in that particular representation. However, once a particular basis of modes $\{u_\bk\}$ are chosen, it fixes the vacuum state as one whose Wightman function satisfies \eqref{eq: two-point-wightman}. In flat spacetime, the standard plane-wave basis in the inertial coordinates defines the Minkowski vacuum, which is the only state that respects the full Poincaré symmetry and minimizes the expectation value of the (renormalized) stress-energy tensor \cite{fewster2019algebraic}. In curved spacetimes, there are multiple inequivalent states that qualify as a vacuum state \cite{wald1994quantum}.

\subsection{Relationship with canonical quantization} 

The usual Fock space in canonical quantization arises from the GNS reconstruction applied to $\A(\M)$ with respect to some vacuum state $\omega$. Technically speaking, any pure quasifree (Hadamard) state qualifies as a vacuum state, as their GNS representation gives rise to the Fock vacuum in the standard sense \cite{KayWald1991theorems} (see also \cite{fewster2019algebraic} for a different definition in Minkowski spacetime). This is the best one can do in generic curved spacetimes as there is no preferred vacua without further assumptions such as time translation symmetry. 

In more detail, the GNS Hilbert space $\mathcal{H}_\omega$ for the pure quasifree representation is the Fock space over the one-particle Hilbert space $\mathcal{H}$
\begin{align}
    \mathcal{H}_\omega \equiv \fock = \bigoplus_{n=0}^\infty \mathcal{H}^{\odot n}\,,
\end{align}
where $\mathcal{H}^{\odot 0}\cong \C$ and $\mathcal{H}^{\odot n}$ means symmetrized direct sum (the $n$-particle sector of the Fock space). In this representation, we can write
\begin{align}
    \pi_\omega(\hat\phi(f)) = \hat{a}(({KEf})^*) + \hat{a}^\dagger(KEf)\,.
\end{align}
In what follows we drop $\pi_\omega$ if it is clear from the context that we are using a Fock representation. Note that if we consider complex smearing functions $f:\M\to \C$, we can write
\begin{align}
    \hat{\phi}(f) \equiv \hat{\phi}(\Re f) + i \hat{\phi}(\Im f)\,.
\end{align}
The operators $\hat{a}(u^*),\hat{a}^\dagger(v)$ are \textit{smeared ladder operators}\footnote{There are various conventions on how to label the smeared ladder operators (see \cite{tjoa2022capacity}). We follow \cite{wald1994quantum} in that it views $\hat{a},\hat{a}^\dagger$ as being linear in their arguments. The convention in \cite{fewster2019algebraic} does not include complex conjugation in the argument for $\hat{a}(\cdot)$ and views $\hat{a}$ as an anti-linear map while $\hat{a}^\dagger$ is linear.} obeying the CCR
\begin{align}
    [\hat{a}(u^*),\hat{a}^\dagger(v)] = \braket{u,v}_{\textsf{KG}}\openone
\end{align}
on a suitable dense domain of the Fock space.

The standard canonical approach is recovered by working with the unsmeared field operator $\hat{\phi}(\sx)$ and considering the Fourier mode decomposition
\begin{align}
    \hat{\phi}(\sx) = \int\dd^n\bk\,\hat{a}_\bk u_\bk(\sx) + \hat{a}_\bk^\dagger u_\bk^*(\sx)\,,
\end{align}
where $u_\bk(\sx)$ and $u_\bk^*(\sx)$ are positive- and negative-frequency modes normalized to Dirac delta functions via the KG inner product:
\begin{subequations}
\begin{align}
    \braket{u_\bk,u_{\bk'}}_{\textsf{KG}} &= \delta^n(\bk-\bk')\,,\\
    \braket{u^*_\bk,u^*_{\bk'}}_{\textsf{KG}} &= -\delta^n(\bk-\bk')\,,\\
    \braket{u_\bk,u^*_{\bk'}}_{\textsf{KG}} &= 0\,.
\end{align}
\end{subequations}
These modes are not proper elements of the one-particle Hilbert space but they are convenient to work with. In particular, we see that $\hat{a}^{\phantom{*}}_\bk\equiv \hat{a}(u_\bk^*)$ and 
$\hat{a}_\bk^\dagger\equiv \hat{a}^\dagger(u_\bk)$. If we were to pick any other quasifree state, the resulting one-particle structure will differ and this can give rise to Hilbert space representations that are unitarily inequivalent, e.g., by considering Kubo-Martin-Schwinger (KMS) thermal states \cite{kubo1957statistical,martinSchwinger1959theory,KayWald1991theorems,fewster2019algebraic}. 

Using these modes, we can decompose the solution $Ef$ in the ``eigenmode basis'' $\{u_\bk,u_\bk^*\}$
\begin{align}
    Ef \equiv \int\dd^n\bk \,\braket{u_\bk,Ef}_{\textsf{KG}}u_\bk + \braket{u^*_\bk,Ef}_{\textsf{KG}}u_\bk^*\,. \label{eq: basis decomposition of E}
\end{align}
From the CCR and the definition of the Wightman function, we see that
\begin{align}
   {i}E(\sx,\sx') &= \int\dd^n\bk\,u_\bk(\sx)u_\bk^*(\sx') - u_\bk^*(\sx)u_\bk(\sx')\,, 
\end{align}
the positive-frequency part $KEf$ of $Ef$ reads
\begin{align}
    KEf &= \int\dd^n\bk \,\braket{u_\bk,Ef}_{\textsf{KG}}u_\bk \equiv \tcb{-i}\int\dd^n\bk\,{f_\bk} u_\bk  
\end{align}
where
\begin{align}
    {f}_\bk\coloneqq \int\dd V\,f(\sx)u_\bk^*(\sx)\,.
    \label{eq: f-fourier}
\end{align}
Furthermore, since $KEf $ lies in $\mathcal{H}$, it is convenient to use the notation $\ket{KEf}\in \mathcal{H}$ and $\ket{u_\bk}$ for the (improper) basis $u_\bk$. In this notation, we write
\begin{align}
    \ket{KEf} = {-i}\int\dd^n\bk \, {{f}_\bk} \ket{u_\bk}\,,
    \label{eq: KEf-notation}
\end{align}
where the global phase $-i$ is a matter of convention. This suggests that we can label the ladder operators using the ``momentum space smearing function'' ${f}_\bk$ rather than the ``real space positive-frequency solution'' $KEf$, i.e., we can write
\begin{align}
    \hat{\phi}(f) &= \hat{a}({f}_\bk^*) + \hat{a}^\dagger({f}_\bk) \equiv \int\dd^n\bk\,\rr{\hat{a}_\bk^{\phantom{\dagger}} {{f}^*_\bk} + \hat{a}^\dagger_\bk {{f}_\bk}}\,. 
    \label{eq: fock-representation}
\end{align}
Essentially, we are using the elements of the one-particle Hilbert space $\mathcal{H}$ to label the field observables rather than using the spacetime smearing functions $\CS$. This notation is particularly convenient if the focus is more about which modes of the field that the detectors are coupled to \cite{spohn1989spinboson,hasler2021existence,fannes1988equilibrium,tjoa2024UDW}.

\section{Setup}
\label{sec: setup}

In this section, we introduce the UDW detector model that we will use to couple Alice's and Bob's qubits to the field and briefly review the notion of quantum capacity for a quantum channel and coherent information. These are relevant measures of how much quantum information can be transmitted (on average) through the quantum channel between Alice and Bob.

\subsection{UDW detector model}
\label{subsec: UDW}

In the UDW framework, Alice and Bob carry their own qubit (`detector') with a free Hamiltonian $\hat H_\nu = \frac{1}{2}\Omega_\nu \hat \sigma_z$, where $\nu \in \{A, B\}$, $\Omega_\nu$ the energy gap and $\hat \sigma_z$ the Pauli-$Z$ operator. We denote the eigenstates of $\hat \sigma_s$ by $\ket{\pm_s}$ with $s \in \{x,y,z\}$, where we do not label the Pauli matrices with $\nu$ to reduce notational clutter and the context will make it clear.  The excited and ground states of the free Hamiltonian $\hat H_\nu$ are given by $\ket{\pm_z}$ with eigenvalues $\pm \Omega_\nu/2$ respectively.

The UDW interaction between Alice's and Bob's detectors and the field is assumed to be linear in the field as it represents a simplified model of dipole interaction $\hat{\mathbf{d}}\cdot\hat{\mathbf{E}}$ in quantum optics, i.e., as a spin-boson type interaction \cite{tjoa2024UDW}. The general form of such an interaction is given by a linear combination of the form\footnote{This is a slight modification of the usual spin-boson model since typically spin-boson models are assumed to have time-independent interactions \cite{tjoa2024UDW} and to account for the covariant formulation of the UDW model \cite{Tales2020GRQO}.}
\begin{align}
    \hat h_{I,\nu}(\sx) \coloneqq \sum_{i}\lambda_{\nu,i} f_{\nu,i}(\sx) \hat m_{\nu,i}(\tau(\sx)) \otimes \hat \phi(\sx)\,,
    \label{eq: hamiltonian-density}
\end{align}
where $\lambda_{\nu,i}$ is the coupling strength, $\hat {m}_{\nu,i}(\tau)$ is a Hermitian operator acting on the qubit, and $f_{\nu,i}(\sx)$ is a spacetime smearing function defining the duration of the interaction and the spatial profile of the detector. The most commonly used UDW model involves only one term in $\hat{h}_{I,\nu}(\sx) = \lambda_\nu f_\nu(\sx)\hat{m}_\nu(\tau(\sx))\otimes\hat{\phi}(\sx)$ \cite{Tales2020GRQO}, but we will need the more general version in \eqref{eq: hamiltonian-density} for our purposes. We drop the tensor product symbol if the context is clear.

In the interaction picture, the time evolution operator generated by the interaction Hamiltonian \eqref{eq: hamiltonian-density} reads \cite{Tales2020GRQO}
\begin{equation}
    \hat U = \mathcal{T} \exp\left[-i \int_\M \dd{V} \sum_\nu \hat h_{I,\nu}\right]
    \label{eq: unitary-general}
\end{equation}
where $\dd{V} \coloneqq \dd^n{\sx} \sqrt{-g}$ is the invariant volume element and $\mathcal{T}$ is the time-ordering prescription\footnote{The time-ordering is assumed to be given by some global time function using some spacetime foliation, see \cite{Bruno2020time-ordering} for discussions about some subtleties of time-ordering for the spatially smeared UDW model.}. In general, it is not possible to solve the time evolution analytically due to the time-dependence of the interaction, thus many analyses involving the UDW detectors consider perturbative methods in the weak coupling regimes. However, our goal in this paper is to establish a perfect quantum channel between Alice and Bob and as was explained in \cite{Simidzija2020capacity}, this requires our model to be in the non-perturbative regime.

In order to construct the perfect quantum channel that can transmit quantum information perfectly, we need to be able to calculate the interaction unitary $\hat{U}$ non-perturbatively (in the sense of not truncating the Dyson series in the weak coupling regime). For this purpose, we will need to use the so-called delta-coupling regime \cite{tjoa2023nonperturbative}, where the interaction timescale is assumed to be much faster than all the relevant timescales of the problem so that the interaction can be taken to occur at a single instant in time. Delta-coupled detectors have been used in many contexts such as entanglement extraction, relativistic communication, work extraction, and many others (see, e.g., \cite{Simidzija2018no-go,Simidzija2017coherent,lapponi2023relativistic,Landulfo2016communication,Simidzija2020capacity,tjoa2022capacity,gallock2024relativistic,kollas2024engine,polo2024sequence}). As each detector has its own rest frame, we need to consider the Fermi normal coordinates (FNC) associated with each observer  \cite{poisson2011motion}, whose center-of-mass (COM) trajectory is parametrized by their respective proper times \cite{Tales2020GRQO}.

The physically reasonable assumption we make is that in the FNC, $\bar{\sx} \equiv (\tau,\bar{\bx})$, the spacetime smearing functions can be decomposed as
\begin{align}
    f(\sx(\bar{\sx})) = \chi(\tau)F(\bar{\bx})\,,
\end{align}
where $(\tau,\bar{\bx}=\mathbf{0})$ labels the COM trajectory. In other words, Alice (and Bob) can in their rest frames distinguish the spatial profile (which is assumed to satisfy `{Born rigidity}', i.e., time-independent with respect to $\tau$) from the switching function that controls the duration and strength of the interactions. The Hamiltonian density in this case reads
\begin{align}
    \hat h_{I,\nu}(\bar{\sx}) \coloneqq \sum_{i}\lambda_{\nu,i} \chi_{\nu,i}(\tau)F_{\nu,i}(\bar{\bx}) \hat m_{\nu,i}(\tau ) \hat \phi(\sx(\bar{\sx}))
    \label{eq: hamiltonian-density-FNC}
\end{align}
and the delta coupling interaction corresponds to the case when
\begin{align}
    \chi_{\nu,i}(\tau) = \delta(\tau-\tau_{\nu,i})
\end{align}
for some constants $\tau_{\nu,i}\in \R$. 

For our purposes, we consider the scenario where Alice and Bob each interact with the field via delta coupling twice, so that we need to consider multiple interaction times $\tau_{\nu,i}$  with $\tau_{\nu,1} < \tau_{\nu,2}$. More generally, if Alice (or Bob) performs a sequence of $N$ delta interactions with $\tau_{\nu,i}<\tau_{\nu,i+1}$, the corresponding unitary time evolution $\hat{U}_\nu$ can be decomposed into a sequence of \textit{simple-generated unitaries} \cite{Simidzija2018no-go,tjoa2023nonperturbative} (see also \cite{polo2024sequence})
\begin{align}
    \hat U_\nu &= \prod_{j=1}^N\hat{U}_{\nu,j}\,,\,\,\hat U_j = \exp[-i \lambda_\nu \hat m_\nu(\tau_{\nu,j})  \hat \phi(F_{\nu,j})].
\end{align}
The time-ordering $\mathcal{T}$ is taken care of since the unitaries $\hat U_i$ are written in time-increasing order. When two or more observers are involved, we can use the global time coordinates to order $\hat{U}_{\textsc{a}}$ and $\hat{U}_{\textsc{b}}$: for example, if Alice and Bob are spacelike separated then the ordering does not matter, while Alice unitary acts before Bob $\hat{U}_{\textsc{b}}$ if $\supp(f_{\textsc{b}})$ is in the causal future of $\supp(f_{\textsc{a}})$. The unitary $\hat{U}_\nu$ can be computed non-perturbatively since $\hat{U}_j$ is a \textit{controlled unitary}: for any two Hermitian operators $\hat{A}, \hat{B}$ such that $\hat{A}=\sum_{i=1}^ra_j \ket{a_j}\!\bra{a_j}$ is finite-dimensional operator in its spectral decomposition, we can write
\begin{align}
    \exp\left[{i \hat{A}\otimes \hat B}\right] = \sum_{j=1}^r\ket{a_j}\!\bra{a_j}\otimes \exp\left[{ia_j \hat{B}}\right]\,.
\end{align}
As we will see, for our setup the full unitary time evolution $\hat{U}$  will consist of products of Alice's two simple-generated unitaries and also Bob's two simple-generated unitaries.

Last but not least, let us recall the unitaries found in \cite{Simidzija2020capacity} for Minkowski spacetime. There Alice performs two simple-generated unitaries, but one of them involves coupling to a conjugate momentum
\begin{align}
    \hat{U}_{\textsc{a}} = e^{i\hat{\sigma}_x \hat{\pi}(F_{\textsc{a}})}e^{i\hat{\sigma}_z \hat{\phi}(F_{\textsc{a}})}\,,
    \label{eq: flat-space-Alice-unitary}
\end{align}
where the \textit{spatially smeared operators} are
\begin{subequations}
\begin{align}
    \hat{\phi}(F_{\textsc{a}}) &= \lambda_1\int\dd^3\bx\,F_{\textsc{a}}(\bx)\hat{\phi}(t_{\textsc{a}},\bx)\,,\\
    \hat{\pi}(F_{\textsc{a}}) &= \lambda_2\int\dd^3\bx\,F_{\textsc{a}}(\bx)\partial_t\hat{\phi}(t_{\textsc{a}},\bx)\,.
\end{align}
\end{subequations}
Note that here they take $t_{\textsc{a}}\coloneqq t_{A,1}$ and $t_{A,2} = t_{\textsc{a}}+\epsilon$ and consider the limit as $\epsilon\to 0^+$ so that the unitary generated by the conjugate momentum acts later. Bob's unitary is then constructed as a ``suitable inverse'' of Alice's unitary to faithfully extract Alice's qubit.

Our construction requires us to make two modifications to the above unitaries. First, we can generalize the conjugate momentum operator in curved spacetime, i.e., 
\begin{equation}
    \hat \pi(\sx) \coloneqq \partial_t\hat{\phi}(\sx)\equiv \sqrt{h} t^a\nabla_a \hat \phi(\sx)\,,
\end{equation}
where $t^a$ is the forward pointing unit normal, orthogonal to $\Sigma_t$ with an induced metric $h$. Second, it is straightforward to see from integration by parts that 
\begin{align}
    \hat{\pi}(f) &= \int\dd V\,f(\sx)\partial_t\hat{\phi}(\sx) = \hat{\phi}(-\partial_tf)\,.
\end{align}
Thus we can always view the smeared conjugate momentum operator as a smeared field operator with a different smearing function (namely, its time derivative). In Minkowski spacetime, using the spacetime smearing notation we can thus write
\begin{align}
    \hat{\pi}(F_{\textsc{a}})\equiv \hat{\pi}(\delta \cdot F_{\textsc{a}}) = \hat{\phi}(-\delta'\cdot F_{\textsc{a}}) \,,
\end{align}
where $\delta'(t)$ is the distributional derivative of the Dirac delta function $\delta(t)$. This second step might appear unconventional since $\hat{\pi}$ is not usually regarded as a spacetime-smeared operator, but this provides a simple way to recast all basic observables in terms of a single field operator $\hat{\phi}$. The connection between the modern AQFT approach based on spacetime-smeared operators (\textit{cf.} Section~\ref{sec: AQFT}) and the ``old'' (non-covariant) AQFT formulations based on two spatially smeared operators $\{\hat{\phi}|_\Sigma,\hat{\pi}|_\Sigma\}$ at a fixed Cauchy slice $\Sigma$ with equal-time CCR $[\hat\phi(t,\bx),\hat{\pi}(t,\by)]=i\delta^{(n)}(\bx-\by)$ can be made more precise and rigorous  (see, e.g., \cite{dimock1980algebras,kay1978linear}, and also \cite{hack2015cosmological,wald1994quantum} for details). 

The two modifications above suggest that what we really need for the generalization to curved spacetime is that Alice couples to two smeared field operators $\hat{\phi}(f_1),\hat{\phi}(f_2)$ that do not commute, i.e., $E(f_1,f_2)\neq 0$, and not necessarily to the field and its conjugate momentum. Indeed, in the modern AQFT framework the coupling to momentum is somewhat awkward because the perfect quantum channel in \cite{Simidzija2020capacity} requires that Alice couples to $\hat{\phi}(F_{\textsc{a}})$ more strongly than $\hat{\pi}(F_{\textsc{a}})$, but the coupling strengths $\lambda_1,\lambda_2$ would need to have different units, so one needs an additional length scale to compare their strengths. By considering different smeared field operators $\hat{\phi}(f_i)$, we can have $\lambda_i$ with the same units and we could use other length scales (e.g., from the effective size of the spatial smearing) if we insist on interpreting one of them as conjugate momentum. In what follows we assume that $f_i$ (and hence $\lambda_i$) have the same units for simplicity.

\subsection{Quantum capacity and coherent information}

Here we briefly review the notion of quantum capacity of a quantum channel and quantum coherent information, following closely the exposition in \cite{Simidzija2020capacity,Landulfo2016communication} (see \cite{khatri2020principles,wilde2011classical,gyongyosi2018surveycapacity,Holevo2020capacities} for more in-depth details). We denote by $\mathscr{D}(\mathcal{H})$ the space of density operators associated with some Hilbert space $\mathcal{H}$. Let $\Phi:\mathscr{D}(\mathcal{H}_\textsc{a})\to \mathscr{D}(\mathcal{H}_\textsc{b})$ be a quantum channel --- a completely-positive trace-preserving (CPTP) map --- between Alice's and Bob's density operators. 

An ideal quantum communication channel is one that is able to send any quantum state from Alice to Bob, or equivalently, one that preserves entanglement reliably since Alice's state can always be viewed as an entangled state with some environment that purifies it.  Schematically the idea is the following. First, Alice prepares a purification of the state $\rho_{\textsc{a}}\in \mathscr{D}(\mathcal{H}_\textsc{a})$, namely, $\rho_{\textsc{ea}}\coloneqq \ket{\psi_{\textsc{ea}}}\!\bra{\psi_{\textsc{ea}}}\in \mathcal{H}_\textsc{e}\otimes{\mathcal{H}_\textsc{a}}$ with $\mathcal{H}_\textsc{a}\cong\mathcal{H}_\textsc{e}$. Then Alice applies an encoding channel  $\mathcal{E}:\mathscr{D}(\mathcal{H}_\textsc{e}\otimes \mathcal{H}_\textsc{a})\to\mathscr{D}(\mathcal{H}_\textsc{e}\otimes \mathcal{H}_\textsc{a}^{\otimes N})$ to map her share of pure state into $N$ quantum systems, i.e.,
\begin{align}
    \rho_{\textsc{ea},N}\coloneqq (\openone_\textsc{e}\otimes \mathcal{E})(\rho_\textsc{ea})\,.
\end{align}
After encoding, Alice sends each quantum system through $N$ independent uses of the communication channel $\Phi$, i.e.,
\begin{align}
    \rho_{\textsc{eb},N}\coloneqq \Phi^{\otimes N}(\rho_{\textsc{ea},N})\in\mathscr{D}(\mathcal{H}_\textsc{e}\otimes\mathcal{H}_\textsc{b}^{\otimes N})\,.
\end{align}
Bob's task is to use a decoding channel $\mathcal{D}:\mathscr{D}(\mathcal{H}_\textsc{e}\otimes \mathcal{H}_\textsc{b}^{\otimes N})\to \mathscr{D}(\mathcal{H}_\textsc{e}\otimes \mathcal{H}_\textsc{b})$, such that
\begin{align}
    \rho_{\textsc{eb}} \coloneqq (\openone_{\textsc{e}}\otimes \mathcal{D})(\rho_{\textsc{eb},N})\,.
\end{align}
This is a good encoding-decoding procedure if for a given $\epsilon$ we can have
\begin{align}
    ||\tilde{\rho}_\textsc{eb}-\rho_\textsc{eb}||_1\leq \epsilon
\end{align}
where $||\cdot||_1$ is the trace norm, $\tilde{\rho}_\textsc{eb}$ is the state obtained if $\Phi$ is replaced with a \textit{noiseless channel} $\openone_{\textsc{a}\to\textsc{b}}:(\rho\in \mathscr{D}(\mathcal{H}_\textsc{a})) \mapsto(\rho\in \mathscr{D}(\mathcal{H}_\textsc{b}))$. 

The rate of the communication channel is the number of qubits transmitted per use of the channel \cite{wilde2011classical}, i.e.,
\begin{align}
    R \coloneqq \frac{1}{n}\log_2(\dim\mathcal{H}_\textsc{a})
\end{align}
\green{where $\log_2(\dim\mathcal{H}_\textsc{a})$ is the number of qubits in $\mathcal{H}_\textsc{a}$ and $n$ is the (possibly infinite) number of channel uses.} This rate is said to be achievable if for any $\delta,\epsilon>0$, there exists an $(n,R-\delta,\epsilon )$ for sufficiently large $n$. The \textit{quantum capacity} $\mathcal{Q}[\Phi]$ of the channel $\Phi$ is defined by the supremum of all achievable rates (see \cite{khatri2020principles,wilde2011classical} for more details). 

For practical calculations that we need, we consider a useful form of the quantum capacity in terms of an information-theoretic measure called \textit{coherent information}. Given a quantum channel $\Phi:\mathscr{D}(\mathcal{H}_\textsc{a})\to \mathscr{D}(\mathcal{H}_\textsc{b})$, let $\ket{\psi_{\textsc{ea}}}$ be a purification of Alice's input state $\rho_{\textsc{a}}$ and define
\begin{align}
    \rho_{\textsc{eb}} &\coloneqq (\openone\otimes\Phi)(\ket{\psi_\textsc{ea}}\!\bra{\psi_\textsc{ea}})\,.
    \label{eq: state-bob-environment}
\end{align}
Using $\rho_{\textsc{b}} \coloneqq \mathrm{Tr}_\textsc{e}(\rho_{\textsc{eb}}) =  \Phi(\rho_\textsc{a})$, the coherent information $I_c(\rho_{\textsc{a}},\Phi)$ is defined to be \cite{schumacher2002relative, lloyd1997capacity}
\begin{align}
    I_c(\rho_\textsc{a},\Phi) &\coloneqq S(\Phi(\rho_\textsc{a})) - S(\rho_{\textsc{eb}})\,.
\end{align}
We may sometimes write $I_c(\psi_{\textsc{ea}},\Phi)$ to make clear the joint state on $EA$. The coherent information depends on the input state $\rho_{\textsc{a}}$ and the channel $\Phi$ but is independent of its purification. From this, we can then define the coherent information of a quantum channel $I_c(\Phi)$ by maximizing over all input states:
\begin{align}
    I_c(\Phi)&\coloneqq \max_{\rho_{\textsc{a}}}I_c(\rho_\textsc{a},\Phi)\,.
    \label{eq: coherent information def}
\end{align}
Note that for qubit channels we have that $-1\leq I_c(\rho,\Phi)\leq 1$ and $0\leq I_c(\Phi)\leq 1$ \cite{wilde2011classical}. We can then rewrite the quantum capacity using the Lloyd-Shor-Devetak (LSD) formula \cite{lloyd1997capacity,shor2002capacity,devetak2003capacity}, given by \cite{khatri2020principles,wilde2011classical,gyongyosi2018surveycapacity,Holevo2020capacities}
\begin{align}
    \mathcal{Q}[\Phi]\coloneqq \lim_{n\to\infty}\frac{1}{n}I_{c}(\Phi^{\otimes n})\,,
    \label{eq: quantum-capacity-LSD}
\end{align}
where the RHS is also known as regularized coherent information.  

Despite the somewhat physically intuitive expression for $\mathcal{Q}[\Phi]$, the asymptotic nature of Eq.~\eqref{eq: quantum-capacity-LSD} makes quantum capacity very difficult to compute in most cases unless the quantum channel has very specific properties (e.g., when the quantum channel is \textit{additive}, namely $I_c[\Phi^{\otimes n}] = n I_c[\Phi]$, which is false in general \cite{khatri2020principles,wilde2011classical}). However, for us we do not actually need this: we follow the strategy adopted in \cite{Simidzija2020capacity}, which is to construct a quantum channel for qubits that has $I_c(\Phi) \to 1$ by appropriately adjusting the setup parameters. This works because the quantum coherent information is \textit{superadditive} \cite{hastings2009superadditivity,smith2008superadditive}:
\begin{align}
    I_c(\Phi^{\otimes N}) \geq NI_c(\Phi)
\end{align}
which immediately implies that
\begin{align}
    \mathcal{Q}[\Phi] \geq I_c(\Phi) \geq I_c(\rho_{\textsc{a}},\Phi)\,.
    \label{eq: coherent information lower bound}
\end{align}
For qubit channels, we have $0\leq \mathcal{Q}[\Phi]\leq 1$, thus we can construct an essentially perfect quantum channel if we can show that $I_c[\Phi]$ can be made arbitrarily close to 1. 

The interpretation of the above analysis is that in the limit where $\mathcal{Q}[\Phi]\to 1$, the quantum channel can effectively transmit quantum information (qubits) reliably in \textit{one use of the channel}. It is perhaps not surprising that this is possible if and only if the whole procedure is essentially a ``swapping operation'', i.e., Alice swaps her qubit `into the field', and then Bob swaps out the (approximate) embedded qubit in the field state to recover Alice's state. In effect, this suggests that the core of the protocol comprises of two components: (1) Bob being positioned correctly in spacetime in order to recover Alice's qubit, and (2) Alice and Bob being able to implement as high quality `SWAP gates' as possible\footnote{ In \cite{Simidzija2020capacity}, Alice's unitary is called ENCODING gate, and Bob's unitary is called DECODING gate. While the terminology has physical justification, it does not coincide with the encoding and decoding channels $\mathcal{E},\mathcal{D}$ even in the case of a single use of the communication channel $\Phi$ ($N=1$), as they are not part of $\Phi$ whose capacity is being computed. The ENCODING and DECODING gates were part of the definition of the communication channel.}.

As a final remark, we would like to point out that the existence of many superadditive quantum channels are in some sense a `curse' to finding a good characterization of quantum capacity --- we refer the readers to \cite{leditzky2023nonadditive} for one of the more recent summary and progress on this front. From Eq.~\eqref{eq: coherent information lower bound} we see that we need $0<I_c(\Phi) \leq 1$ for the channel to be able to transmit some quantum information and our construction capitalizes on this. However, the case for $I_c(\Phi) = 0$ is much more subtle, as many quantum channels exhibit strict superadditivity in both weak and strong sense \cite{leditzky2023nonadditive}. Weak superadditivity means that there exists $N$ such that $I_c(\Phi^{\otimes N}) > N I_c(\Phi)$, and strong superadditivity means $I_c(\Phi_1\otimes\Phi_2) > I_c(\Phi_1)+I_c(\Phi_2)$ for two channels $\Phi_1,\Phi_2$. In both cases, the `single-letter' formula for coherent information $I_c(\Phi)$ is not sufficient to calculate the quantum capacity,  although some exceptions are known. For instance, all \textit{degradable channels} (such as some regimes of the dephasing channels) are known to be weakly additive, in which case $\mathcal{Q}[\Phi] = I_c(\Phi)$ and hence non-positive channel coherent information would imply zero quantum capacity \cite{cubitt2008structure,khatri2020principles,wilde2011classical}. All \textit{anti-degradable channels} (such as all entanglement-breaking channels and certain regimes of quantum erasure channels) are known to have zero quantum capacity, though the argument does not rely on additivity and more towards no-cloning type arguments \cite{ruskai2003entanglementbreaking,cubitt2008structure}. Our construction (similarly for \cite{Simidzija2020capacity}) exploits the structure of the detector-field coupling to make the channel coherent information tractable.

\section{Protocol for the perfect relativistic quantum channel}
\label{sec: protocol}

In this section, we present the protocol for Alice and Bob to implement a perfect quantum channel through the field. While our protocol closely follows \cite{Simidzija2020capacity}, our approach naturally extends to arbitrary background spacetimes and choices of vacuum states of the field by recasting the protocol in the AQFT framework. 

A brief visual of the quantum channel is depicted in Fig.~\ref{fig: circuit diagram}. Mathematically, the quantum channel from Alice to Bob generically takes the form
\begin{equation}
    \Phi(\rho_{\textsc{a}}) = \mathrm{Tr}_{\textsc{a}, \phi} \left[\hat U_{\textsc{b}}^{\phantom{\dagger}} \hat U_{\textsc{a}}^{\phantom{\dagger}} \left( \rho_{\textsc{a}} \otimes \ket{0}\!\bra{0} \otimes \rho_{\textsc{b}} \right) \hat U_{\textsc{a}}^\dagger \hat U_{\textsc{b}}^\dagger\right] \,,
    \label{eq: quantum channel}
\end{equation}
where $\ket{0}$ is the field vacuum\footnote{The state $\ket{0}$ can be replaced with any GNS vector state $\ket{\Omega_\omega}$ associated with any of the quasifree representations, which accounts for the existence of many unitarily inequivalent ground states in QFT \cite{Khavkhine2015AQFT,fewster2019algebraic,ruep2021weakly,wald1994quantum,KayWald1991theorems}.}, $\rho_{\textsc{b}}$ is Bob's initial state, and $\hat U_{\textsc{a}}$ and $\hat U_{\textsc{b}}$ are the unitaries implemented by Alice and Bob through the UDW interaction with the field. The task is to construct a perfect quantum channel that can faithfully transmit quantum information to arbitrary accuracy.

Essentially, the basic idea is for Alice to (approximately) perform a `SWAP' gate\footnote{Technically speaking, the qubit and the quantum field have different Hilbert space dimensions but we will call them `SWAP' gates anyway since it would be the usual SWAP gate if we were to replace the quantum field with a third qubit. The fact that we are ``embedding'' the qubit state into the (much higher-dimensional) field can be viewed as a form of encoding operation, which justifies calling Alice's unitary ENCODING gate in \cite{Simidzija2020capacity}, and similarly Bob's unitary can be viewed as decoding operation from the field to his qubit system.} on her qubit state to the field, effectively encoding it into some superposition of approximately orthogonal coherent states, so that Bob can extract the qubit state by performing another swap operation. The entire problem thus reduces to analyzing under what circumstances the UDW interaction allows for nearly perfect `SWAP' operations on both parties which will depend on both the coupling strengths and the supports of the interactions. In particular, we will see that Bob can only perform a perfect `SWAP' gate if he is causally connected to Alice and the details of the interactions determine the quality of the swap operation. 

More precisely, Alice starts with an arbitrary state $\ket{\psi_{\textsc{a}}}\coloneqq c_1 \ket{+_z} + c_2 \ket{-_z}$, equivalently $\rho_{\textsc{a}}=\ket{\psi_{\textsc{a}}}\!\bra{\psi_{\textsc{a}}}$, and encodes that state into the field using a `SWAP' gate. Bob should then be able to decode the state from the quantum field into his own Hilbert space. If the `SWAP' gates $\hat U_{\textsc{a}}$ and $\hat U_{\textsc{b}}$ are implemented correctly, then in the idealized limit we should have a perfect quantum channel $\Phi(\rho_{\textsc{a}}) = \openone_{\textsc{a}\to\textsc{b}}(\rho_{\textsc{a}}) \equiv \rho_\textsc{a}\in\mathcal{H}_\textsc{b}$, noting that this channel is not trivial because it corresponds to sending the same state $\rho_\textsc{a}$ from Alice's Hilbert space to Bob's Hilbert space. We start by discussing how Alice encodes her state into the field:

\begin{figure}
    \centering
    \includegraphics[width=0.9\linewidth]{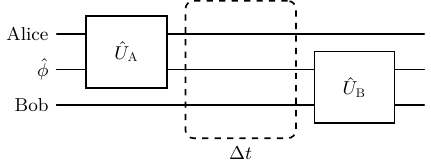}
    \caption{A perfect quantum channel from Alice to Bob through a quantum field $\hat \phi$ consists of two unitary operators: Alice encoding her state into the field and then Bob decoding the state. The time in $\Delta t$ between the two gates is a general $t$ that foliates the spacetime.  }
    \label{fig: circuit diagram}
\end{figure}

\begin{proposition}
    \label{thm: encoding}
    Let  $\ket{\psi_{\textsc{a}}} = c_1 \ket{+_z} + c_2 \ket{-_z}$ be Alice's initial state and let $\ket{0}$ be the initial vacuum state of the field. Consider Alice's unitary operation at time $t=t_{\textsc{a}}$ given by\footnote{The usual convention would be $\hat{U}_{\textsc{a}} =  e^{-i \hat \sigma_x \hat \phi(f_2)} e^{-i \hat \sigma_z \hat \phi(f_1)}$, but we opt for the positive sign for the exponent for convenience as many expressions appear more symmetric (this was also implicitly done in \cite{Simidzija2020capacity}). This amounts to the replacement $\hat{h}_{I}\to -\hat{h}_{I}$, or equivalently $f_i\to -f_i$. }
    \begin{equation}
        \hat U_{\textsc{a}} = e^{i \hat \sigma_x \hat \phi(f_2)} e^{i \hat \sigma_z \hat \phi(f_1)}\,.
        \label{eq: Alice unitary}
    \end{equation}
    Then Alice can approximately swap the state $\ket{\psi_{\textsc{a}}}$ into the field provided the following conditions are satisfied 
    \begin{subequations}\label{eq: Alice conditions}
    \begin{align}
        E(f_1, f_2) &= \frac{\pi}{4} \textnormal{ mod } 2\pi 
        \,,\label{eq: fine-tuning condition}\\
        E(f_1, f_2)^2 &\gg \mathsf{W}(f_2, f_2)\,. 
        \label{eq: strong-coupling condition}
    \end{align}
    \end{subequations}
    The resulting state after Alice's operation reads 
    \begin{align}
        \hat U_{\textsc{a}} (c_1 \ket{+_z} &+ c_2 \ket{-_z}) \ket{0}\notag\\ 
        &= \ket{+_y} (c_1 \ket{+\alpha} - i c_2 \ket{-\alpha})\,,
    \end{align}
    where $\ket{\pm \alpha}$ are coherent states of the quantum field defined by
    \begin{equation}
        \ket{\pm \alpha} \coloneqq e^{\pm i \hat \phi(f_1)} \ket{0}.
        \label{eq: coherent state def}
    \end{equation}
\end{proposition}
\begin{proof}
    In order to prove this statement, we will explicitly apply Eq.~\eqref{eq: Alice unitary} to Alice's qubit and the field and show that the qubit state is encoded under the conditions in Eqs.~\eqref{eq: fine-tuning condition} and \eqref{eq: strong-coupling condition}.

    The first unitary in Eq.~\eqref{eq: Alice unitary} gives 
    \begin{align}
        e^{i \hat \sigma_z \hat \phi(f_1)}\ket{\psi_{\textsc{a}}}\ket{0} 
        &= c_1 \ket{+_z} \ket{+\alpha}  + c_2 \ket{-_z} \ket{-\alpha}.
    \end{align}
    Next, we need to apply the second unitary $e^{i \hat \sigma_x \hat \phi(f_2)}$ but we first note that 
    \begin{equation}
        \hat \phi(f_2) \ket{\pm \alpha} = \pm E(f_1, f_2) \ket{\pm \alpha} + e^{\pm i \hat \phi(f_1)} \hat \phi(f_2) \ket{0}\,,
    \end{equation}
    which follows directly from the Baker-Campbell-Haussdorff (BCH) formula and the CCR. Now we see that if 
    \begin{equation}
        E(f_1, f_2)^2 \gg \bra{0}{\hat \phi(f_2) \hat \phi(f_2)}\ket{0} = \mathsf{W}(f_2, f_2)
    \end{equation}
    then $\ket{\pm \alpha}$ become approximate eigenvectors of $\hat{\phi}(f_2)$
    \begin{equation}
        \hat{\phi}(f_2) \ket{\pm \alpha} \approx \pm E(f_1, f_2) \ket{\pm \alpha}. \label{eq: approx eigenvector}
    \end{equation}
    If we pick the eigenvalues to be $E(f_1, f_2) = \frac{\pi}{4}\text{ mod }2\pi$, then the action of Alice's full unitary $\hat{U}_{\textsc{a}}$ gives
    \begin{align}
        \hat U_{\textsc{a}} & (c_1 \ket{+_z} + c_2 \ket{-_z}) \ket{0}\notag\\
        &= e^{i \hat \sigma_x \hat \phi(f_2)}(c_1 \ket{+_z} \ket{+\alpha}  + c_2 \ket{-_z} \ket{-\alpha})\notag\\
        &\approx c_1 e^{i \frac{\pi}{4} \hat\sigma_x } \ket{+_z} \ket{+\alpha}  
        + c_2 e^{-i \frac{\pi}{4} \hat \sigma_x } \ket{-_z} \ket{-\alpha} \notag\\
        &= \ket{+_y} \left(c_1 \ket{+\alpha} - i c_2 \ket{-\alpha}\right)\,,
        \label{eq: logical-qubit}
    \end{align}
    where we used $e^{i \frac{\pi}{4}\hat \sigma_x } \ket{+_z} = \ket{+_y}$ and $e^{-i \hat \sigma_x \frac{\pi}{4}} \ket{-_z} = -i\ket{+_y}$ in the last equality.

\end{proof}

The second of Alice's conditions in Eq.~\eqref{eq: strong-coupling condition} can be interpreted as a ``strong coupling condition'' that requires Alice's coupling through the field via the interaction profile $f_1$ be much stronger than through $f_2$. In \cite{Simidzija2020capacity}, this corresponds to choosing $f_2$ such that $\hat{\phi}_2$ corresponds to conjugate momentum $\hat{\pi}(F_{\textsc{a}})$ and the condition translates to Alice coupling to the field $\hat{\phi}(F_{\textsc{a}})$ much more strongly than to $\hat{\pi}(F_{\textsc{a}})$ (\textit{cf.} Eq.~\eqref{eq: flat-space-Alice-unitary}). \green{More intuitively, the two coherent states $\ket{\pm \alpha}$ are never orthogonal, $\langle+\alpha|\!-\!\alpha\rangle \neq 0$, but condition \eqref{eq: strong-coupling condition} ensures that the overlap is sufficiently small that the two states are close to orthogonal.} We can interpret Eq.~\eqref{eq: fine-tuning condition} as a ``fine-tuning condition'' that performs near exact $\frac{\pi}{4}$ rotations in the Bloch sphere so that Alice's second unitary decouples her detector from the quantum field. \green{$E(f_1, f_2) = \frac{\pi}{4}\textrm{ mod }2\pi$ are the unique values such that $e^{i E(f_1, f_2) \hat \sigma_x } \ket{+_z} \propto e^{-i  E(f_1, f_2) \hat \sigma_x} \ket{-_z}$ and the qubit decouples from the field \footnote{Alternative values of $E(f_1, f_2)$ could be used for other protocols, such at $E(f_1, f_2) = 0\textrm{ or }\frac{\pi}{2}$ leaving Alice's qubit entangled with the coherent states of the field.}}


\green{We remark that our procedure can be viewed as a minimal, simple example where the interaction unitary can be computed non-perturbatively without truncating any Dyson series expansion, as is usually the case for generic choice of switching functions. It is known that if we resort to leading-order weak coupling regime, then the quantum capacity is at most $\mathcal{O}(\lambda_A\lambda_B)$ \cite{Simidzija2020capacity}, thus any attempt in obtaining close-to-perfect quantum channel capacity requires us to go beyond perturbation theory. }

\green{We emphasize the fact that Alice can encode her qubit into the field using any spatial profiles $F_A$ as long as conditions \eqref{eq: Alice conditions} are satisfied. Therefore, in principle there is a fair amount of freedom in specifying the details of the interactions and coupling strengths provided they satisfy \eqref{eq: Alice conditions}. However, as will be discussed in Section \ref{subsec: intermediate scenarios}, it very difficult to analyze the capacity of quantum channels in the $0 < I_c(\Phi) < 1$ regime, thus in Section~\ref{sec: analysis} we will provide some analytic studies on certain limiting regimes, in particular the limit where $I_c\to 1$ where we have a perfect noiseless communication channel.  }

Now we move on to discussing how at a later time, Bob can decode Alice's state from the quantum field into his own qubit detector. 
Even in the idealized limit where Alice effectively implements a `SWAP' gate through unitary $\hat{U}_\textsc{a}$, Bob cannot simply perform the inverse unitary $\hat{U}_\textsc{a}^{-1}$ for two reasons. First, Bob is in the causal future of Alice, so the smearing functions of Bob's interactions cannot be at the same time slice as $f_i$ of Alice. More importantly, $\hat{U}_\textsc{a}^{-1}$ will simply undo what Alice does resulting in the trivial identity channel $ \openone_\textsc{a}(\rho_\textsc{a}) = \rho_\textsc{a}\in\mathcal{H}_\textsc{a}$ which corresponds no interactions with the field at all, instead of a noiseless communication channel $\openone_{\textsc{a}\to\textsc{b}}(\rho_\textsc{a})=\rho_\textsc{a}\in \mathcal{H}_\textsc{b}$ through the field. In other words, we want that Alice's operations defined through $\hat \phi(f_i)$ be defined at time $t=t_{\textsc{a}}$, while we want Bob to decode from the field using only field operators $\hat \phi(g_i)$ localized at time $t=t_{\textsc{b}}$. Bob's unitary will thus take the form
\begin{equation}
    \hat U_{\textsc{b}}^{\phantom{\dagger}} \coloneqq  e^{-i \hat \sigma_z^\textsc{B} \hat \phi(g_1)} e^{-i \hat \sigma_x^\textsc{B} \hat \phi(g_2)}\,,
    \label{eq: bobs unitary}
\end{equation}
where the superscript `B' emphasizes that Bob acts on $\mathcal{H}_\textsc{b}$ and  $g_i$ are chosen such that $\hat{U}_\textsc{b}$ implements a `SWAP' gate to decode the qubit from the field at time $t_\textsc{b}$. For this, 
we need a way to write the operators $\hat \phi(f_i)$ using different spacetime smearing functions $g_i$ defined at a different Cauchy slice. To achieve this we use the following theorem:
\begin{proposition}
    \label{thm: bob smearing}
    Suppose we have two smeared field operators $\hat \phi(f)$ and $\hat \phi(g)$, then the two operators are equal $\hat \phi(f) = \hat \phi(g)$ if the smearing functions satisfy
    \begin{equation}
        Ef = Eg. \label{eq: bobs smearing}
    \end{equation}
\end{proposition}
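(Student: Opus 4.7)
The plan is to invoke the symplectic-smearing identity in equation~\eqref{eq: symplectic smearing}, which gives the most economical route. Since any smeared field operator can be rewritten as $\hat\phi(f) = \sigma(Ef, \hat\phi)$, the right-hand side sees the test function $f$ only through the associated classical solution $Ef \in \Sol_\R(\M)$. Consequently, the hypothesis $Ef = Eg$ immediately yields
\begin{equation*}
    \hat\phi(f) = \sigma(Ef, \hat\phi) = \sigma(Eg, \hat\phi) = \hat\phi(g),
\end{equation*}
which is the claim. The whole content of the proposition is that the symplectic smearing --- rather than the raw spacetime smearing --- is the invariant object that labels the field observables.

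For transparency I would also give the complementary algebraic derivation, which makes clear \emph{why} this works from the CCR-algebra side. The equality $Ef = Eg$ is equivalent to $f - g \in \ker E$, and on a globally hyperbolic spacetime a standard result in the theory of normally hyperbolic operators (implicit in Section~\ref{sec: AQFT}) identifies $\ker E \cap \CS$ with the image $\hat P \CS$. Thus there exists $h \in \CS$ with $f - g = \hat P h$. Combining this with the defining relation $\hat\phi(\hat P h) \equiv 0$ of the CCR algebra $\A(\M)$ gives
\begin{equation*}
    \hat\phi(f) - \hat\phi(g) = \hat\phi(f - g) = \hat\phi(\hat P h) = 0,
\end{equation*}
reproducing the conclusion.

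The main (and only) subtle point is the characterization $\ker E \cap \CS = \hat P \CS$ used in the second argument; this is a PDE statement requiring global hyperbolicity together with compact support, but it is already built into the AQFT framework reviewed earlier, so no new analysis is required. Physically, this proposition is precisely the reparametrization freedom that will let Bob's decoding operation, originally defined in terms of smearings supported at Alice's time $t_{\textsc{a}}$, be rewritten in terms of smearings $g_i$ supported on any other Cauchy slice --- in particular, a slice containing Bob's worldtube --- as long as the causal propagator images coincide.
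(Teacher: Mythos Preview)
Your primary argument is identical to the paper's own proof: both invoke the symplectic-smearing identity \eqref{eq: symplectic smearing} and conclude via the chain $\hat\phi(f) = \sigma(Ef,\hat\phi) = \sigma(Eg,\hat\phi) = \hat\phi(g)$. Your supplementary algebraic derivation through $\ker E \cap \CS = \hat P\,\CS$ is correct and adds useful context (indeed the paper alludes to this in its footnote on the converse), but it goes beyond what the paper actually presents.
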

\begin{proof}
    The proof is almost immediate from the symplectic definition of smeared operators in Eq.~\eqref{eq: symplectic smearing}. If we assume $Ef = Eg$\footnote{The converse statement is false. From the axioms of AQFT, we also have that $\hat{\phi}(\hat{P}f)=0$ which is the KG equation. Thus $\hat{\phi}(g)$ is only unique up to homogeneous solution of the KG equation, i.e., $g\sim g+\hat{P}f$.}, then 
    \begin{equation}
        \hat \phi(f) = \sigma(Ef, \hat \phi) = \sigma(Eg, \hat \phi) = \hat \phi(g)\,.
    \end{equation}
\end{proof}
\noindent{Given that Alice and Bob's smearing functions $f_i$ and $g_i$ are related by Eq.~\eqref{eq: bobs smearing}  and given that Bob's qubit starts in the $\ket{+_y}$ state (the state that Alice's qubits is left as after $\hat U_\textsc{A}$), we have achieved our goal of creating a perfect quantum channel from Alice to Bob.}

Some remarks are in order in view of the above Proposition~\ref{thm: bob smearing}. First, intuitively Alice's state is encoded into the field amplitude that propagates into the causal future of Alice's interaction region $ \mathscr{J}^+(\supp f_i)$, where $\mathscr{J}^+(\supp(f_i))$ is the causal future of $\supp(f_i)$; for Bob to decode that state, his qubit has to be coupled everywhere where the initial data has propagated at a given time $t=t_{\textsc{b}}$, which in turn forces Bob's detector to interact with the field with the correct support and coupling strengths. In particular, $\supp(g_i)$ should essentially be $\Sigma_{t_{\textsc{b}}} \cap \supp (Ef_i)$, where $\supp(Ef_i)\subseteq \mathscr{J}^+(\supp f_i)$. For simplicity, we assume that $\supp(f_1)=\supp(f_2)$ and hence $\supp(g_1)=\supp(g_2)$, which in the delta-coupling scenario is easily enforced as we are fixing the spatial profile of the interaction on Alice's side --- this, in turn, fixes the needed spatial profile for Bob's coupling. 

Second, despite its simplicity and simple interpretation, Eq.~\eqref{eq: bobs smearing} is largely impractical to use for actually calculating Bob's spatial smearing functions $G_i(\bb x)$ for a given time $t = t_{\textsc{b}}$. Using the eigenmode basis provided in Eq.~\eqref{eq: basis decomposition of E} and the momentum space smearing functions in Eq.~\eqref{eq: f-fourier}, the condition in Eq.~\eqref{eq: bobs smearing} can be equivalently written in `momentum space' as
\begin{equation}
    f_{\bb k} = g_{\bb k}.
\end{equation}
These equations in momentum space are often easier to work with since in many simple cases (such as all FRW spacetimes) the spatial part of the mode function $u_{\bb k}(\sx) \propto e^{i \bb k \cdot \bb x}$ and we can use Fourier transforms, which is similar to what was originally done in \cite{Simidzija2020capacity} where Bob's spatial smearing functions were related to Alice's in momentum space. More generally, in generic curved spacetimes, we do not have a simple plane-wave basis, so it is essential that all the conditions we demand for the perfect quantum channel are given only in terms of the correlation functions of the field without reference to Fourier transforms. 

Last but not least, note that there is an asymmetry in how we quantify the quality of Alice's and Bob's respective `SWAP' gates, i.e., how close their unitaries are in achieving the ideal `SWAP' operation. On the one hand, Alice needs two smearing functions $f_1$ and $f_2$ that satisfy the conditions in Proposition~\ref{thm: encoding} to approximate the ideal `SWAP'. On the other hand, Bob needs two smearing functions $g_1$ and $g_2$ that are related to Alice's via Proposition~\ref{thm: bob smearing} and hence automatically satisfy the conditions in Proposition~\ref{thm: encoding}. However, Proposition~\ref{thm: encoding} alone is not enough for the perfect quantum channel: if  $Ef_i \neq Eg_i$, Bob will \textit{not} be able to decode Alice's state even though by construction Bob's smearing functions $g_1$ and $g_2$ satisfy Eqs.~\eqref{eq: fine-tuning condition} and \eqref{eq: strong-coupling condition}. In a sense, to decode the state using the `SWAP' operation with the field Bob needs an extra condition, namely that Bob needs to interact ``at the correct spacetime regions''. The reason for this asymmetry is that the Hilbert space dimension of the field $\hat \phi$ is (much) larger than that of the detectors: it is much simpler for Alice to encode her state in a larger system but Bob will need additional information to decode it. 

In summary, we have constructed a perfect relativistic quantum channel that allows Alice to send her qubit state to Bob through the quantum field. The quantum channel consists of two steps:
\begin{enumerate}[leftmargin=*,label=(\arabic*)]
    \item At time $\tau= \tau_{\textsc{a}}$, Alice encodes her qubit into the field using the unitary $\hat U_{\textsc{a}}$ such that the conditions in Proposition~\ref{thm: encoding} are satisfied.
    
    \item At a later time $\tau = \tau_{\textsc{b}}$, Bob decodes the state from the field using the unitary $\hat U_{\textsc{b}}$ defined in Eq.~\eqref{eq: bobs unitary}. However, to implement this protocol, Bob's detector must interact across a spacetime region that satisfies the conditions specified in Proposition~$\ref{thm: bob smearing}$.
\end{enumerate}
As the proper times are not the same in both trajectories, in practice one can use the ambient global coordinate systems and foliation to help with the time-ordering of these operations. Once times $\tau_{\textsc{a}}$ and $\tau_{\textsc{b}}$ are specified, we demand that Alice's spatial smearing functions $F_i(\bb x)$ fulfill conditions \eqref{eq: fine-tuning condition} and \eqref{eq: strong-coupling condition}, and Bob's functions $G_i(\bb x)$ fulfill the condition \eqref{eq: bobs smearing}\footnote{In principle, Alice or Bob could each need up to four spatial smearing functions since $f_i$ or $g_i$ could include both $\delta(t)$ and $\delta'(t)$ terms. For example, in flat spacetime, we can have $f_1 = \delta(t) F_{11}(\bb x) + \delta'(t)F_{12}(\bb x)$ (similarly for $f_2$), but for simplicity these can be chosen to be equal. Both $g_1$ and $g_2$ are fixed by $f_1$ and $f_2$ in the ideal case.}. 

 In order to simplify our subsequent discussions, we make a physically reasonable assumption that Alice's interaction profiles $f_i$ are completely fixed except their coupling constants, i.e., in FNC we have
\begin{align}
    f_i = \lambda_i \chi_i(\tau)F(\bar{\bx})
\end{align}
where $\chi_i$ are either the Dirac delta function $\delta$ or its distributional derivative $\delta'$ and the spatial profile is the same for both $f_1,f_2$ (hence they have the same supports $\supp(f_i) = \supp (F)\subseteq \M$). In other words, Alice only varies her interaction profile $f_i$ by tuning the coupling strengths $\lambda_i$ without changing the `shape' of the interaction profile, and this is sufficient to satisfy the perfect `SWAP' conditions \eqref{eq: strong-coupling condition}-\eqref{eq: fine-tuning condition} in Proposition~\ref{thm: encoding}.

\section{Analytic Tests of the Quantum Channel}
\label{sec: analysis}

In this section, we will prove that the protocol given in Section~\ref{sec: protocol} produces a perfect quantum channel in the correct limit without any approximations like Eq.~\eqref{eq: approx eigenvector}. This proves that entanglement can always be perfectly transmitted between two localized qubits through a quantum field regardless of the background spacetime geometry. We also discuss the case when Alice and Bob can in principle perfectly implement the SWAP gates with the field but are spacelike separated from each other. For this channel, the quantum and classical channel capacity are both zero, as we expect with relativistic causality.

\subsection{Ideal case: perfect quantum channel}
\label{ssec: perfect conditions}

Recall that our task is to show that the channel coherent information $I_c(\Phi)$ can be made arbitrarily close to 1, thus proving that the (relativistic) quantum channel can transmit quantum information perfectly. This means computing the coherent information $I_c(\rho_{\textsc{a},0},\Phi)$ and maximizing over all possible input states on Alice's side. Using the definition of coherent information \eqref{eq: coherent information def}, it turns out that it is sufficient to consider the case when Alice's initial state is the maximally mixed state $\rho_{A, 0} = \frac12 \openone$ and Bob initial state is $\ket{+_y}$ state in order to show that we can attain maximum channel coherent information $I_c(\Phi)\to 1$ and hence maximal quantum capacity. We stress that this does not mean Alice should send a maximally mixed state to Bob, which is ``garbage'' as a message and is even classical, since in practice Alice wants to be able to send pure states properly. Also, in general, it is not true that $I_c(\Phi)$ is maximized by the maximally mixed state: for example, the completely depolarizing channel $\Phi(\rho) = \openone/2$ has $I_c(\openone/2,\Phi) = -1$ but the channel coherent information $I_c(\Phi)=0$ is attained by any pure state.

In order to show that $I_c(\Phi)\to 1$ is attainable, we need to introduce a qubit of the environment $E$ that purifies Alice's maximally mixed state $\openone/2$, such that the joint state of Alice and the environment is the maximally entangled state $\ket{\psi_\textsc{ea}} = \frac{1}{\sqrt{2}} \left(\ket{-_z} \ket{+_z} + \ket{+_z} \ket{-_z} \right)_\textsc{ea}$. As discussed in Section~\ref{sec: setup}, a perfect quantum channel will preserve all the entanglement between Alice and the environment $EA$ to Bob and the same environment $EB$. To compute $I_c (\openone/2, \Phi)$, we must evaluate the state $\rho_\textsc{eb}$ in Eq.~\eqref{eq: state-bob-environment}. 

Writing out the expression for $\rho_{\textsc{eb}}$ in terms of the unitaries $\hat U_{\textsc{a}}$ and $\hat U_{\textsc{b}}$, we obtain:
\begin{equation}
    \rho_\textsc{eb} \!=\! \mathrm{Tr}_{A\phi} [\hat U_{\textsc{b}} \hat U_{\textsc{a}} \left( \ket{\psi_\textsc{ea}}\!\bra{\psi_\textsc{ea}} \!\otimes\! \ket{0}\!\bra{0} \!\otimes\! \ket{+_y}\!\bra{+_y} \right) \hat U_{\textsc{a}}^\dagger \hat U_{\textsc{b}}^\dagger].
\end{equation}
The simplest way of continuing is to decompose the unitaries using projections onto the eigenvectors of the qubit operators. That is, for $\hat U_{\textsc{a}}$ we write
\begin{align}
    \hat U_{\textsc{a}} &= \sum_{x,z \in \{\pm\} } \hat P_x \hat P_z \otimes e^{i x \hat \phi(f_2)} e^{i z \hat \phi(f_1)} \nonumber \\
    &= \sum_{x,z \in \{\pm\} } \hat P_x \hat P_z \otimes W(xEf_2) W(zEf_1)
\end{align}
where $\hat P_i$ are the projectors onto the eigenstates of $\hat \sigma_i$ and in the last line we used the Weyl algebra notation introduced in Eq.~\eqref{eq: Weyl-generator}. For Bob's unitary, we write
\begin{equation}
    \hat U_{\textsc{b}} = \sum_{x,z \in \{\pm\} } \hat P_z \hat P_x \otimes W(-zEf_1) W(-xEf_2)
\end{equation}
where the projection operators are understood to act on Bob's system. We have written Bob's unitary as in Eq.~\eqref{eq: bobs unitary}, the imperfect case is discussed in the following sections. We also expand the Bell state as $\ket{\psi_\textsc{ea}} = \frac{1}{\sqrt{2}} \sum_j \ket{-j_z} \ket{j_z}$ with $j \in \{ \pm \}$ to get the expression
\begin{align}
    \rho_\textsc{eb} &= \frac12 \sum_{j,k,x_i, z_i}  \omega(\hat{O}) \bra{k_z}{ \hat P_{z_1} \hat P_{x_1} \hat P_{x_4} \hat P_{z_4}}\ket{j_z} \nonumber \\
    &\times \ket{-j_z}_{\textsc{e}}\bra{-k_z} \otimes \hat P_{z_3}\hat P_{x_3} \ket{+_y}_{\textsc{b}}\bra{+_y} \hat P_{x_2}\hat P_{z_2}
\end{align}
where we write $ \omega(\hat{O}) \equiv \braket{0|\hat{O}|0}$ and
\begin{align}
    \hat O &= W(-z_1Ef_1) W(-x_1Ef_2) W(x_2Ef_2) W(z_2Ef_1) \nonumber\\
    & \times W(-z_3Ef_1) W(-x_3Ef_2) W(x_4Ef_2) W(z_4Ef_1),
\end{align}
$x_i$ stands for $x_1, x_2, x_3, x_4$, similarly for $z_i$, and all sums are taken over the set $\{ \pm \}$. Provided that we can obtain an expression for $ \omega(\hat{O}) $, a computer algebra system can straightforwardly sum over the $2^{10}$ terms in the sum. We start by redefining summation indices $x_1 \mapsto -x_1$, $z_1 \mapsto -z_1$, $x_3 \mapsto -x_3$, and $z_3 \mapsto -z_3$ such that  
\begin{align}
    \rho_\textsc{eb} &= \frac12 \sum_{j,k,x_i, z_i} \omega(\hat{O}) \bra{k_z}{ \hat P_{-z_1} \hat P_{-x_1} \hat P_{x_4} \hat P_{z_4}}\ket{j_z} \nonumber \\
    &\times \ket{-j_z}_{\textsc{e}}\bra{-k_z} \otimes \hat P_{-z_3}\hat P_{-x_3} \ket{+_y}_{\textsc{b}}\bra{+_y} \hat P_{x_2}\hat P_{z_2}
\end{align}
and $\hat O$ only contains terms without minus signs
\begin{align}
    \hat O &= W(z_1Ef_1) W(x_1Ef_2) W(x_2Ef_2) W(z_2Ef_1) \nonumber\\
    & \times W(z_3Ef_1) W(x_3Ef_2) W(x_4Ef_2) W(z_4Ef_1).
\end{align}
In \cite{Simidzija2020capacity}, $\bra{0}{\hat O}\ket{0}$ is evaluated using the Baker-Campbell-Hausdorff (BCH) formula and Wick's theorem. In this paper, we adopt a much simpler approach using the AQFT framework. It is straightforward (albeit somewhat tedious) to apply the third Weyl condition $W(Ef) W(Eg) = e^{- \frac{i}{2}E(f,g)} W(E(f+g))$ seven times to obtain
\begin{align}
    \hat{O} &= W(E((z_1\!+\!z_2\!+\!z_3\!+\!z_4)f_1+(x_1\!+\!x_2\!+\!x_3\!+\!x_4)f_2)) \nonumber\\
    &\times e^{-\frac{i}{2} (x_1+x_2)(z_1-z_2-z_3-z_4) E(f_1, f_2)} \nonumber \\
    &\times e^{-\frac{i}{2} (x_3+x_4)(z_1+z_2+z_3-z_4) E(f_1, f_2)}
\end{align}
where only one Weyl operator remains multiplied by a phase factor. 

Before we evaluate the expectation value of the remaining Weyl operator, it will be convenient to consider the decomposition of a Wightman two-point function into its real (symmetric) and imaginary (anti-symmetric) parts, i.e.,
\begin{align}
    \mathsf{W}(f,g) &= \frac{1}{2}\omega(\{\hat{\phi}(f),\hat{\phi}(g)\})+\frac{1}{2}\omega([\hat{\phi}(f),\hat{\phi}(g)])\notag\\
    &= \frac{1}{2}\mathsf{H}(f,g) + \frac{i}{2}E(f,g)\,,
\end{align}
where we used the CCR to obtain the second term and $\mathsf{H}(f,g) \coloneqq \omega(\{\hat{\phi}(f),\hat{\phi}(g)\}) \equiv \frac{1}{2}\mu(Ef,Eg)$ is the smeared Hadamard two-point distribution of the field (\textit{cf.} Eq.~\eqref{eq: real-bilinear IP}). This decomposition is important because only the symmetric part $\mathsf{H}(f,g)$ is \textit{state-dependent}, while $E(f, g)$ is independent of the state. This is useful as it tells us which parts of our analysis depend on the field state. Furthermore, while we have assumed that the field starts in some vacuum state for simplicity, but the protocol in Section~\ref{sec: protocol} automatically extends to all quasi-free states such as thermal states \cite{KayWald1991theorems}, squeezed vacuum state \cite{tjoa2023nonperturbative}, and with slight modifications to more general Gaussian states such as coherent states (see, e.g., \cite{ruep2021weakly}). 

We can now use the definition of quasifree state $\omega$ in Eq.~\eqref{eq: quasifree-definition} for the vacuum state to evaluate the expectation value $\omega(\hat{O})$ in terms of Wightman and Hadamard functions giving us
\begin{align}
    \omega(\hat{O}) &= e^{-\frac12 (z_1+z_2+z_3+z_4)^2 \mathsf{W}(f_1, f_1)-\frac12 (x_1+x_2+x_3+x_4)^2 \mathsf{W}(f_2, f_2)}  \nonumber\\
    &\times e^{-\frac12 (x_1+x_2+x_3+x_4)(z_1+z_2+z_3+z_4)\mathsf{H}(f_1, f_2)} \nonumber \\
    &\times e^{-\frac{i}{2} (x_1+x_2)(z_1-z_2-z_3-z_4) E(f_1, f_2)} \nonumber \\
    &\times e^{-\frac{i}{2} (x_3+x_4)(z_1+z_2+z_3-z_4) E(f_1, f_2)}\,.
\end{align}
Note that this expression is valid automatically for any quasifree states such as squeezed vacuum states and (squeezed) thermal states \cite{tjoa2023nonperturbative}.  The density matrix $\rho_{\textsc{eb}}$ in the ordered basis $\set{\ket{+_z+_z}_{\textsc{be}}, \ket{+_z-_z}_{\textsc{be}}, \ket{-_z+_z}_{\textsc{be}}, \ket{-_z-_z}_{\textsc{be}}}$ now reads
\begin{equation}
    \rho_{\textsc{eb}} = \begin{pmatrix}
        P_- & 0 & A & C\\ 
        0 & P_+ & X & B\\
        A^* & X^* & P_+ & 0\\
        C^* & B^* & 0 & P_-
    \end{pmatrix}
\end{equation}
where
{\allowdisplaybreaks
\begin{align*}
    P_\pm &= \frac{1}{4}\left(1 \pm e^{-2 \mathsf{W}(f_1, f_1) } \sin(2E(f_1, f_2)) \right)\\
    X &= \frac{1}{4}  \sin(2E(f_1, f_2)) \left( e^{- 2 \mathsf{W}(f_2, f_2)} +  \sin(2E(f_1, f_2)) \right)\\
    A &= - \frac{i}{4} e^{ -2 \mathsf{W}(f_1, f_1) } \cos(2E(f_1, f_2))\\
    &\times \left( \sin(2E(f_1, f_2)) -e^{ -2W(f_2, f_2)} \sinh(4\mathsf{W}(f_2, f_1)) \right)\\
    B &= -\frac{i}{4} e^{ - 2 \mathsf{W}(f_1, f_1) } \cos(2E(f_1, f_2)) \\
    &\times \left(\sin(2E(f_1, f_2)) + e^{-2\mathsf{W}(f_2, f_2)} \cosh(4 \mathsf{W}(f_1, f_2)) \right)\\
    C &= \frac{1}{4} e^{ - 8 \mathsf{W}(f_1, f_1)} \sin(2E(f_1, f_2)) \\
    &\times \left( \sin(2E(f_1, f_2)) - e^{-2 \mathsf{W}(f_2, f_2)}\cosh(4\mathsf{H}(f_1, f_2)) \right)\,.
\end{align*}
}

Now that we have an analytic expression for $\rho_{\textsc{eb}}$, we can use our two conditions in Proposition~\ref{thm: encoding} and verify that the coherent information can be made arbitrarily close to 1. First we use the fine-tuning condition $E(f_1, f_2) = \frac{\pi}{4}$ which simplifies $\rho_{\textsc{eb}}$  considerably to
\begin{equation}
    \rho_{\textsc{eb}} = \begin{pmatrix}
        \tilde P_- & 0 & 0 & \tilde C\\ 
        0 & \tilde P_+ & \tilde P_+ & 0\\ 
        0 & \tilde P_+ & \tilde P_+ & 0\\
        \tilde C & 0 &0 & \tilde P_-
    \end{pmatrix}
\end{equation}
where
\begin{align*}
    \tilde P_\pm &= \frac{1}{4} \left(1 \pm e^{-2\mathsf{W}(f_2, f_2)}\right)\\
    \tilde C &= \frac14 e^{-8 \mathsf{W}(f_1, f_1)} \left(e^{-2\mathsf{W}(f_2, f_2)} \cosh(4\mathsf{H}(f_1, f_2)) +1\right).
\end{align*}

Next, we use the strong coupling condition $E(f_1, f_2)^2 \gg \mathsf{W}(f_2, f_2)$. Recall that we are assuming for simplicity that on Alice's side, her interaction profile will be kept fixed except for the coupling strengths, thus only $\lambda_1,\lambda_2$ are adjustable parameters. In this case, the strong-coupling condition essentially means that $\lambda_1\gg\lambda_2$ and hence we have the hierarchy of scales
\begin{subequations}
        \begin{align}
        \mathsf{W}(f_1,f_1) \sim\mathcal{O}(\lambda^2_1) &\gg E(f_1,f_2)^2\sim \mathcal{O}(\lambda^2_1\lambda^2_2) \\
        E(f_1,f_2)^2\sim \mathcal{O}(\lambda^2_1\lambda^2_2) &\gg \mathsf{W}(f_2,f_2)\sim \mathcal{O}(\lambda^2_2)\\
     \mathsf{W}(f_1,f_1)\sim\mathcal{O}(\lambda^2_1) &\gg \mathsf{H}(f_1,f_2) \sim \mathcal{O}(\lambda_1\lambda_2)
    \end{align}
\end{subequations}
With this hierarchy, we can express $\lambda_1$ in units of $\lambda_2$ so that  $\lambda_1\coloneqq c\lambda_2$ for some $c\geq 0$, which is also convenient since in general $\lambda_i$ has physical units\footnote{The unit of $\lambda_i$ is $[\text{Length}]^{\frac{n-3}{2}}$, and we have assumed for simplicity that $\lambda_i$ have the same units since $f_i$ are chosen to be (\textit{cf.} the end of Section~\ref{subsec: UDW})}. We can thus write
\begin{subequations}
\begin{align}
    \mathsf{W}(f_1,f_1) &= c^2\lambda_2^2 \mathsf{w}_1\,,\quad \mathsf{W}(f_2,f_2) = \lambda_2^2 \mathsf{w}_2\,,\\
    E(f_1,f_2) &= c\lambda^2_2\mathsf{e}\,,\qquad \mathsf{H}(f_1,f_2) = c\lambda_2^2\mathsf{h}\,, 
\end{align}
\end{subequations}
where $\mathsf{w}_1,\mathsf{w}_2,\mathsf{e},\mathsf{h}$ are fixed by the spacetime smearings. Proposition~\ref{thm: encoding} then translates to
\begin{align}
    \lambda_2^2(c^2\mathsf{e}^2)\gg \mathsf{w}_2\,,\quad \lambda_2^2(c\mathsf{e}) = \frac{\pi}{4}\mod 2\pi\,.
    \label{eq: dimless-proposition-1}
\end{align}
Note that the fine-tuning condition can still be satisfied: for example, for a fixed coupling $\lambda_2$ we find $c$ such that 
\begin{align}
    \lambda_2^2(c\mathsf{e}) = \frac{\pi}{4}\mod 2\pi\,,
\end{align}
and we can always ensure that $c$ is sufficiently large so that
\begin{align}
    \lambda^2_2(c^2\mathsf{e}^2) = c\cdot\mathsf{e}\rr{\frac{\pi}{4}+2\pi n} \gg \mathsf{w}_2
\end{align}
in order to satisfy both conditions in \eqref{eq: dimless-proposition-1}. 

In terms of the rescaled coupling constants, the strong-coupling condition implies that for fixed $\lambda_2$ and $c\gg 1$, we get
\begin{equation}
    \rho_{\textsc{eb}} \xrightarrow{c\gg 1} 
    \begin{pmatrix}
        \tilde P_- & 0 & 0 & 0\\ 
        0 & \tilde P_+ & \tilde P_+ & 0\\ 
        0 & \tilde P_+ & \tilde P_+ & 0\\
        0 & 0 &0 & \tilde P_-
    \end{pmatrix}
     \,.
\end{equation}
Using negativity $\mathcal{N}$ as an entanglement measure \cite{vidal2002negativity}, 
\begin{align}
    \mathcal{N}[\rho_{\textsc{eb}}] = \frac{||\rho_\textsc{eb}^{\Gamma_\textsc{e}}||_1-1}{2} = \frac{1}{2}e^{-2\lambda_2^2\mathsf{w}_2}
\end{align}
where $\rho_{\textsc{eb}}^{\Gamma_\textsc{e}}$ is the partial transpose of $\rho_\textsc{eb}$ and $||\cdot||_1$ is the trace norm. This means that $\rho_{\textsc{eb}}$ is always an entangled state for any finite coupling $\lambda_2\ll \lambda_1$ and the channel is able to preserve some quantum entanglement. 

Finally, we obtain the perfect quantum channel in the idealized limit where $c\to\infty$ and $\lambda_2\to 0$ such that $\lambda_1$ is finite and still satisfies \eqref{eq: dimless-proposition-1}, i.e., 
\begin{equation}
    \rho_{\textsc{eb}} = \frac12 \begin{pmatrix}
    0 & 0 & 0 & 0\\ 
    0 & 1 & 1 & 0\\ 
    0 & 1 & 1 & 0\\
    0 & 0 &0 & 0 
    \end{pmatrix}\,,
\end{equation}
i.e., a Bell state with maximum negativity $\mathcal{N}[\rho_\textsc{eb}] = 1/2$. Therefore, in this limit the coherent information $I_c(\Phi)\to 1$, which corresponds to a perfect quantum channel that can transmit a maximally entangled state perfectly.

\subsection{Alice and Bob are spacelike separated}

The preceding analysis shows that the ability to implement the approximate `SWAP' gate depends on how closely one is able to satisfy the fine-tuning and strong-coupling conditions in Proposition~\ref{thm: encoding}. While this is straightforward for Alice, Proposition~\ref{thm: bob smearing} tells us that for Bob to emulate $\hat\phi(f_i)$ defined at some (global coordinate) time $t(\tau_\textsc{a})$,  Bob's spacetime smearing functions $g_i$ at time $t(\tau_\textsc{b})$ has to be somewhat delocalized over some region of spacetime to catch all the signals from Alice. Since the quantum channel $\Phi$ is built out of a relativistic quantum field, we ought to be able to show that there can be no transmission of both classical and quantum information through $\Phi$.

Formally, we start from the CCR which tells us that 
\begin{align}
    E(f_i,g_j) = 0\qquad \forall i,j=1,2
    \label{eq: spacelike-separation-Alice-Bob}
\end{align}
when Alice and Bob are spacelike-separated, i.e., $\supp(g_i)\cap \mathscr{J}^\pm(\supp(f_i)) = \emptyset$. Note that the Hadamard function $\mathsf{H}(f_i,g_j)$ is generically nonzero for spacelike-separated regions due to entanglement contained in the field state \cite{summers1985bell,summers1987bell,tjoa2021harvesting}. The calculations proceeds similarly as in the previous section, except that Bob will have his own arbitrary smearing functions and their respective UDW interaction unitaries are
{\allowdisplaybreaks[4]
\begin{subequations}
\begin{align}
    \hat U_\textsc{A} &= \sum_{x,z \in \{\pm\} } \hat P_x \hat P_z \otimes W(xEf_2) W(zEf_1)\\
    \hat U_\textsc{b} &= \sum_{x,z \in \{\pm\} } \hat P_z \hat P_x \otimes W(-zEg_1) W(-xEg_2).
\end{align}
\end{subequations}}
The condition for spacelike separation in Eq.~\eqref{eq: spacelike-separation-Alice-Bob} will be enforced later in the calculation once everything is written in terms of two-point functions. 

We can analyse this scenario very generally by allowing Alice and the environment to start in any arbitrary state $\rho_\textsc{ea}$ instead of a maximally entangled state or purifications of any initial state $\rho_{\textsc{a},0}$. The expression for the density matrix $\rho_{\textsc{eb}}$ now reads

\begin{align}
    \rho_\textsc{eb} &= \sum_{x_i, z_i} \omega(\hat{O})  \mathrm{Tr}_\textsc{a} \left[ \hat P_{x_4} \hat P_{z_4} \rho_\textsc{ae} \hat P_{-z_1} \hat P_{-x_1} \right]\notag\\
    & \otimes \hat P_{-z_3}\hat P_{-x_3} \ket{+_y}\!\bra{+_y}_\textsc{b} \hat P_{x_2}\hat P_{z_2}
\end{align}
where the projection operators act on Alice or Bob and $\hat O$ is now defined to be
\begin{align}
    \hat O &= W(z_1Ef_1) W(x_1Ef_2) W(x_2Eg_2) W(z_2Eg_1)  \notag\\
    & \times W(z_3Eg_1) W(x_3Eg_2) W(x_4Ef_2) W(z_4Ef_1).
\end{align}
Applying the third Weyl relation and using Eq.~\eqref{eq: quasifree-definition}, this evaluates to 
\begin{widetext}
    \begin{align}
    \omega(\hat{O}) 
    &= 
    e^{-\frac{i}{2} (z_1 - z_4)(x_1 + x_4) E(f_{1}, f_{2})}  
    e^{-\frac{i}{2} (z_2 + z_3)(x_3 - x_2) E(g_{1}, g_{2})}
    e^{-\frac{i}{2} (z_2 + z_3)(z_1 - z_4) E(f_{1}, g_{1})} \nonumber \\
    &\times 
    e^{-\frac{i}{2} (z_1 - z_4)(x_2 + x_3) E(f_{1}, g_{2})}
    e^{-\frac{i}{2} (z_2 + z_3)(x_1 - x_4) E(f_{2}, g_{1})}
    e^{-\frac{i}{2} (x_2 + x_3)(x_1 - x_4) E(f_{2}, g_{2})} \nonumber \\
    &\times  e^{-\frac12 \left(
      (z_1+z_4) (z_1+z_4)  \mathsf{W}(f_{1}, f_{1})
    + (x_1+x_4) (x_1+x_4)  \mathsf{W}(f_{2}, f_{2})
    + (z_2+z_3) (z_2+z_3)  \mathsf{W}(g_{1}, g_{1})
    + (x_2+x_3) (x_2+x_3)  \mathsf{W}(g_{2}, g_{2}) \right)} \nonumber \\
    &\times e^{-\frac12 \left(
      (x_1+x_4) (z_1+z_4)  \mathsf{H}(f_{1}, f_{2})
    + (z_2+z_3) (z_1+z_4)  \mathsf{H}(f_{1}, g_{1})
    + (x_2+x_3) (z_1+z_4)  \mathsf{H}(f_{1}, g_{2}) \right)} \nonumber \\
    &\times e^{-\frac12 \left(
      (z_2+z_3) (x_1+x_4)  \mathsf{H}(f_{2}, g_{1})
    + (x_2+x_3) (x_1+x_4)  \mathsf{H}(f_{2}, g_{2})
    + (x_2+x_3) (z_2+z_3)  \mathsf{H}(g_{1}, g_{2}) \right)}\,.
\end{align}
This is the most general form that $\omega(\hat O)$ can take in terms of Alice and Bob's two-point functions. We now enforce spacelike-separation using Eq.~\eqref{eq: spacelike-separation-Alice-Bob}, giving us
\begin{align}
    \omega(\hat{O}) 
    &= 
    e^{-\frac{i}{2} (z_1 - z_4)(x_1 + x_4) E(f_{1}, f_{2})}  
    e^{-\frac{i}{2} (z_2 + z_3)(x_3 - x_2) E(g_{1}, g_{2})}
    \nonumber \\
    &\times  e^{-\frac12 \left(
      (z_1+z_4) (z_1+z_4)  \mathsf{W}(f_{1}, f_{1})
    + (x_1+x_4) (x_1+x_4)  \mathsf{W}(f_{2}, f_{2})
    + (z_2+z_3) (z_2+z_3)  \mathsf{W}(g_{1}, g_{1})
    + (x_2+x_3) (x_2+x_3)  \mathsf{W}(g_{2}, g_{2}) \right)} \nonumber \\
    &\times e^{-\frac12 \left(
      (x_1+x_4) (z_1+z_4)  \mathsf{H}(f_{1}, f_{2})
    + (z_2+z_3) (z_1+z_4)  \mathsf{H}(f_{1}, g_{1})
    + (x_2+x_3) (z_1+z_4)  \mathsf{H}(f_{1}, g_{2}) \right)} \nonumber \\
    &\times e^{-\frac12 \left(
      (z_2+z_3) (x_1+x_4)  \mathsf{H}(f_{2}, g_{1})
    + (x_2+x_3) (x_1+x_4)  \mathsf{H}(f_{2}, g_{2})
    + (x_2+x_3) (z_2+z_3)  \mathsf{H}(g_{1}, g_{2}) \right)}\,.
\end{align}
\end{widetext} 
The density matrix $\rho_\textsc{eb}$ can now be calculated 
\begin{equation}
    \rho_\textsc{eb} = \rho_{\textsc{e}} \otimes \begin{pmatrix}
        \frac12 & X\\ X^* & \frac12
    \end{pmatrix}
    \label{eq: rhoEB-spacelike-separated}
\end{equation}
with $\rho_\textsc{e} = \mathrm{Tr}_\textsc{a} [\rho_{\textsc{ea}}]$ and 
\begin{align*}
    X &= -\frac{i}{2} e^{-2 \mathsf{W}(g_1, g_1)} \nonumber\\
    & \left[ e^{-2 \mathsf{W}(g_2, g_2)}\cosh(2\mathsf{H}(g_1, g_2)) + \sin(2E(g_1, g_2)) \right].
\end{align*}

Observe that the density matrix $\rho_\textsc{eb}$ depends only on the initial condition of the environment and Bob's local two-point functions, i.e. not involving Alice's smearing functions at all, and is clearly separable across $B$ and $E$. 
Consequently, the quantum channel preserves no entanglement at all from the joint system $EA$ (viewing it as $\openone\otimes\Phi$). In fact, since we start from any arbitrary state $\rho_\textsc{ea}$, we have effectively shown that $\Phi$ is a PPT channel \cite{graeme2012detect,singh2022detecting}, and hence also an \textit{entanglement-breaking channel} (because PPT is equivalent to separability for two qubits), and these are known to fall under the class of anti-degradable channels with zero quantum capacity $\mathcal{Q}[\Phi] = 0$  \cite{ruskai2003entanglementbreaking}. 

From an information-theoretic viewpoint, zero quantum capacity says nothing about the ability of the quantum channel to transmit classical information, since the classical channel capacity $\mathcal{C}[\Phi]$ is an upper bound for quantum capacity, i.e., $\mathcal{C}[\Phi] \geq \mathcal{Q}[\Phi]$. Quantum channels can also transmit classical bits by, for instance, considering only diagonal density matrix
\begin{equation}
    \rho_{\textsc{a},0} = \begin{pmatrix}
        \alpha & 0\\ 0 & 1-\alpha
    \end{pmatrix}\,,\quad \alpha\in [0,1]\,.
\end{equation}
and indeed it has been shown that for certain delta-coupled UDW models,  the classical capacity $\mathcal{C}[\Phi]$ can be computed exactly and is zero if Alice and Bob are spacelike \cite{tjoa2022capacity,Landulfo2016communication}. Our quantum channel is more complicated as Alice and Bob each perform a sequence of two simple-generated unitaries, but it is not necessary for us to find the explicit formula for $\mathcal{C}[\Phi]$ to show that it is zero. This follows directly from the fact that the density matrix in Eq.~\eqref{eq: rhoEB-spacelike-separated} does not depend on $\rho_{\textsc{a},0}$ at all, therefore the quantum channel carries no information whatsoever about Alice's initial state, hence $\mathcal{C}[\Phi] = 0$. 

We stress that one of the main features of a relativistic quantum channel is precisely that one should be able to show that it respects the causal structure of spacetime by construction. This requires one to be able to show, for instance, that no information transmission is allowed at spacelike separation, or more generally between causally disconnected observers. In fact we can say more: our formalism does not require us to specify anything about the (violation of) strong Huygens' principle \cite{McLenaghan1974huygens,Sonego1992huygenscurved,Faraoni2019huygens,tjoa2021harvesting,Causality2015Eduardo,Casals2020commBH}, since everything is encoded in the causal propagator. For example, in $(3+1)$-dimensional Minkowski spacetime, Bob can be timelike-separated from Alice but for a massless scalar field, the causal propagator has \textit{zero support} in the interior of the light cone (see, e.g., \cite{tjoa2021harvesting,Causality2015Eduardo}). In this case, if Bob's spacetime smearing functions to the field  $g_i$ are strictly in the timelike interior and  $\supp(g_i)\cap \mathscr{J}^+(\supp f_j) = \emptyset$ for any $i,j=1,2$, then $\mathcal{Q}[\Phi]=\mathcal{C}[\Phi]=0$.

\subsection{Intermediate scenarios}
\label{subsec: intermediate scenarios}

In the two preceding subsections, we have analyzed two opposite regimes: the ideal case with $Ef_i = Eg_i$ and the worst case with Alice and Bob spacelike-separated. Let us make some comments on intermediate regimes. 

When the coherent information of a quantum channel is not close to 1, e.g., when $0 \leq I_c(\Phi) \ll 1$, it is in general very difficult to make quantitative claims about the quantum channel capacity for the reasons like superadditivity described in Section~\ref{sec: setup}. In fact, even for quantum channels with $I_c(\Phi) = 0$, only small subsets of these are known to have zero quantum capacity, namely the anti-degradable channels \cite{khatri2020principles,wolf2007degradable,ruskai2003entanglementbreaking} and the PPT channels \cite{graeme2012detect} (see also \cite{singh2022detecting}). For intermediate scenarios, such as when Bob is only partially covering  $\mathscr{J}^\pm(\supp(f_i))$ or implementing faulty SWAP gates we are not able to give generic quantitative statements about $\mathcal{Q}[\Phi]$, as is the case for generic quantum communication channels. 

However, in some situations we may try to use no-cloning type argument to conclude that $\mathcal{Q}[\Phi]=0$ in some scenarios as was done in \cite{Simidzija2020capacity,Jonsson2018qubit} (see also \cite{graeme2012detect,bruss1998cloneapprox,bennett1997erasure}). Suppose Bob is allowed to couple only to at most half of Alice's signal at some future time $t$ that defines the Cauchy slice for Bob's interaction. Consider another observer Charlie who now chooses to couple to the other half of the region at the same time slice so that jointly the interaction support is the full region needed for the perfect quantum channel. Since Charlie's interaction is by construction spacelike separated from Bob, Charlie's interaction is invisible to Bob (and vice versa). If Bob and Charlie were able to individually decode Alice's qubit, this would violate the no-cloning theorem, so we can conclude that the quantum capacity must be zero if the UDW interactions are set up this way\footnote{It is worth emphasizing that the protocol we have (and also in \cite{Simidzija2020capacity}) are generically not ideal, since for any finite interaction strengths the `SWAP' gates are never perfect and the `logical qubit' in the field is only approximate --- consequently one has to be careful about using exact cloning argument. In practice, it would be better to think of the scenario in terms of approximate cloning. Indeed, in Figure~8 of \cite{Simidzija2020capacity} the coherent information vanishes \textit{smoothly} as one varies Bob's interaction support when analyzing the cloning-type argument.}. 

The above discussion is not necessarily the only (or natural) way to view this. One way to think about this from the perspective of the quantum field is this: if Bob manages to decode Alice's state at time $t$, the `logical qubit' state $c_1\ket{+\alpha}-ic_2\ket{-\alpha}$ in Eq.~\eqref{eq: logical-qubit} would have been taken out by Bob, and the field would now change into a new state that has no logical qubit in it (essentially because Bob swapped out the state into his physical detector). Charlie, who couples to the field at any time $t+\epsilon$ for any arbitrary $\epsilon>0$, would no longer have any logical qubit to extract from the field, so Charlie cannot extract another qubit for free. Yet another way to see this is that Bob's interaction requires one to find $g_i$ in $\mathscr{J}^+(\supp(f_j))$ that can `undo' Alice's operation. Therefore, if we choose to pick $g_i$ that does not cover the full region needed for the decoding, then Bob \textit{fails} to implement his ideal $\hat U_{\textsc{B}}$ and instead performs a different unitary operation. In this interpretation, we would say that Bob is unable to perform the required `SWAP' gate to decode Alice's state. Showing that Bob has zero quantum capacity would mean showing that Bob's operations somehow completely depolarize the qubit state and lose all the quantum information in it --- we will not pursue this further in this paper and leave this for future work.

\section{Discussion and outlook}
\label{sec: conclude}

We motivated our work through the question: \textit{What is the information-carrying capacity of quantum fields?} The UDW framework allows us to formulate a precise statement of this question, in terms of quantum capacity of the relativistic quantum channel between Alice and Bob who try to transmit quantum information between their qubit detectors through the quantum field.  In this work, we have successfully generalized the results of \cite{Jonsson2018qubit,Simidzija2020capacity,Blasco2015Huygens,Blasco2016broadcast,simidzija2017cosmo}, in particular \cite{Simidzija2020capacity}, where we construct a quantum channel that can transmit quantum information arbitrarily well through a relativistic quantum scalar field in arbitrary curved spacetimes.

The features of our construction is that the resulting quantum channel, and hence the quantum capacity, are manifestly covariant, respect the causal structure of spacetime, and are independent of the details of the background geometry, topology, and the choice of Hilbert space (quasifree) representations of the CCR algebra. Furthermore, in this generalization we no longer need to deal with (violation of) the strong Huygens' principle that may occur in curved spacetimes for non-conformally invariant fields (see, e.g.,\cite{Blasco2015Huygens,Blasco2016broadcast,McLenaghan1974huygens,Faraoni2019huygens,Sonego1992huygenscurved}), since this is fully encoded in the causal propagator. We also showed explicitly that the quantum and classical capacity is zero for spacelike separated observers, as we expect from quantum communication channel built from a relativistic quantum field.

Crucially, in this generalized formulation it becomes clear that there are two essential features of a near-perfect quantum channel that can transmit quantum information arbitrarily well: (1) Alice and Bob's interactions need to satisfy the so-called fine-tuning and strong coupling conditions in Proposition~\ref{thm: encoding} in order to approximate an ideal `SWAP' gates; (2) Bob needs to couple to the field at appropriate spacetime regions, fixed by Proposition~\ref{thm: bob smearing}, so that he could execute the approximately ideal `SWAP' gates. Indeed, where Bob should be is fixed by the causal propagator that determines where Alice's information propagates in spacetime. The asymmetry between Alice and Bob can be viewed as the consequence of the asymmetry between encoding the qubit state into the field, with the logical basis being the coherent states $\ket{\pm \alpha}$, and decoding the state while taking into account relativistic causality. In this sense, our construction constitutes the most general relativistic qubit channel within the UDW framework that can transmit quantum information arbitrarily well in appropriate regimes and we also provided analyses when the quantum channel is sub-optimal.

There are several future directions that merit investigation, at least from a more fundamental standpoint:
\begin{itemize}[leftmargin=*]
    \item Is it possible to construct a perfect relativistic quantum channel using higher-dimensional detector models (using qudit detector models \cite{lima2023unruh,verdon2016quditQET} rather than qubits)? \textit{A priori} it is not obvious how to construct logical qudits in the field using only controlled displacement operators and what kind of interaction Hamiltonians are needed to do this, if it exists at all. It is also natural to consider several emitters and receivers, exploiting the fact that one can interpret displacement operations can be viewed as Weyl generators with respect to some spacetime smearing functions \cite{tjoa2023nonperturbative}.

    \item As mentioned in \cite{Simidzija2020capacity}, one may be interested in more directional coupling, which can be envisaged in (1+1)D models (see, e.g., \cite{Jonsson2018qubit,Jonsson2014cavityQED}). Our protocol works automatically simply because only the two-point functions change to reflect that the left- and right- movers of the field modes are independent. A more interesting setup would be to study how realistic settings involving waveguides and non-trivial dispersion relations can be adapted for this purpose. 

    \item Is the information-carrying capacity different between different fields with different spins and statistics? For example, how does one compare the scalar field with a fermionic field? The issue one has to deal with is that we have to separate the effects coming from the model dependence of the coupling (e.g., we cannot couple $\hat{\sigma}_x\otimes \hat{\psi}$ where $\psi$ is the fermionic field) from truly field-theoretic physics that contribute to the quantum capacity, if we wish to do it in the same way as we have done in this paper (see, e.g., some proposed fermionic couplings in RQI literature \cite{louko2016fermionflat,bruno2020neutrino}).

    \item Our construction relies on the fact that the quantum field is non-interacting (Gaussian). There are very few studies involving the UDW model with interacting QFTs such as the $\phi^4$ theory, even perturbatively and for a single qubit (see, e.g., \cite{kaplanek2020hot} for one example). The effects of interactions are unclear for (long-distance) quantum communication, both perturbatively and non-perturbatively. A possibly relevant testbed would be to consider known rigorous constructions in (1+1)D interacting QFTs (see, e.g., \cite{glimm1968lambda,glimm1970lambda,glimm1970lambda2,schrader1972yukawa,summers2012perspective}). 

    \item Is it possible to systematically study quantum capacity in the non-relativistic quantum many-body settings? Somewhat related studies to our setup are, for example, \cite{giovanetti2005capacity-spin,bose2003spin-chain,bayat2008spin-chain}. It would be interesting to see the similarities and differences when they are compared with setups involving relativistic quantum fields.
\end{itemize}
We leave these problems open for the future.

\section*{Acknowledgment}

M. K. acknowledges support from the Princeton International Internship Program (IIP). E. T. acknowledges funding from the Munich Center for Quantum Science and Technology (MCQST), funded by the Deutsche Forschungsgemeinschaft (DFG) under Germany’s Excellence Strategy (EXC2111 - 390814868). 

\bibliography{RQC}
\end{document}